\documentclass[a4paper,11pt]{article}

\usepackage{amsthm}
\usepackage{fullpage}%
\hyphenation{op-tical net-works semi-conduc-tor}
\usepackage{graphicx,tikz,enumerate}
\usepackage[colorlinks=true]{hyperref}
\usepackage[title]{appendix}
\usepackage{amssymb, amsfonts,mathtools,stmaryrd}
\usepackage{xcolor}
\usepackage{enumitem}
\usepackage{etoolbox}

\usepackage{algorithm}
\usepackage{algpseudocode}

\usepackage{dsfont}
\usepackage{bbm}
\newtheorem{definition}{\textbf{Definition}}

\newtheorem{theorem}[definition]{\textbf{Theorem}}

\newtheorem{lemma}[definition]{\textbf{Lemma}}

\newtheorem{corollary}[definition]{\textbf{Corollary}}


\newtheorem{proposition}[definition]{\textbf{Proposition}}

\newtheorem{example}[definition]{\textbf{Example}}




\newtheorem*{proof*}{\textbf{Proof}}



\newcommand{\N}{\mathbb{N}}

\newcommand{\msf}[1]{\mathsf{#1}}
\newcommand{\Idm}{\{1,\ldots,m\}}
\newcommand{\I}{\mathsf{I}}

\newcommand{\LTL}{\textrm{LTL}}

\newcommand{\CTL}{\textrm{CTL}}

\newcommand{\post}{\mathrm{post}}
\newcommand{\paths}{\mathrm{paths}}

\newcommand{\prop}{\mathsf{Prop}}

\DeclareMathOperator{\lF}{\mathbf{F}}
\DeclareMathOperator{\lG}{\mathbf{G}}
\DeclareMathOperator{\lU}{\mathbf{U}}
\DeclareMathOperator{\lX}{\mathbf{X}}
\DeclareMathOperator{\lR}{\mathbf{R}}
\DeclareMathOperator{\lW}{\mathbf{W}}
\DeclareMathOperator{\lM}{\mathbf{M}}

\DeclareMathOperator{\lA}{\mathbf{A}}
\DeclareMathOperator{\lE}{\mathbf{E}}



\usepackage{pifont}
\usepackage{authblk}
\usepackage{graphicx}
\author[1,2]{Benjamin Bordais}
\author[1,2]{Daniel Neider}
\author[3]{Rajarshi Roy}
\affil[1]{TU Dortmund University, Dortmund, Germany}
\affil[2]{Center for Trustworthy Data Science and Security, University Alliance Ruhr, Dortmund, Germany}
\affil[3]{Max Planck Institute for Software Systems, Kaiserslautern, Germany}

\date{}

\begin{document}
	\title{Learning Temporal Properties is $\msf{NP}$-hard}
	\maketitle	
	
	\section{Introduction}
	
	Temporal logics are probably the most popular specification languages employed to specify the temporal behavior of computing systems.
	Originally introduced in the context of program verification, temporal logics such as Linear Temporal Logic (LTL)~\cite{pnueli77} and Computation Tree Logic (CTL)~\cite{ClarkeE81} now find applications in a wide variety of topics, including reinforcement learning~\cite{FuT14,SadighKCSS14,LiVB17,CamachoIKVM19}, motion planning~\cite{GiacomoV99,FainekosKP05}, formal verification~\cite{pnueli77,ClarkeE81,QueilleS82,CimattiCGR00,Biere09}, reactive synthesis~\cite{PnueliR89,ScheweF07a}, process mining~\cite{CecconiGCMM20,CecconiGCMM22}, and so on.
	The primary reason for the popularity of temporal logics is that its specifications are not only mathematically rigorous but also human-understandable due to the resemblance to natural language.
	
	Most applications that employ temporal logics, even today, rely on manually writing specifications based on human intuition.
	This manual process, however, is inefficient and tedious and often leads to several errors in formulating specifications~\cite{AmmonsBL02,Rozier16,BjornerH14}.
	To mitigate these issues, recent research has focused on automatically learning temporal logic formulae based on observations of the underlying system.
	
	Perhaps, the most basic setting for learning temporal logic is that of inferring a concise formula from a given set of positive (or desirable) and negative (or undesirable) executions of a system.
	The focus on learning a \emph{concise} formula is inspired by the principle of Occam's razor, which advocates that formulae that are small in size are less complex and thus easy to understand.
	For this basic setting, several works~\cite{flie,CamachoM19,scarlet} have devised efficient algorithms with impressive empirical performance.
	
	However, the theoretical analysis of the problem of learning temporal logic in terms of computational complexity is significantly lacking.
	The only notable work that deals with the complexity of learning temporal logic is the work by Fijalkow and Lagarde~\cite{FijalkowL21}.
	They show that learning certain fragments of $\LTL$ (i.e. subclasses of $\LTL$ with a restricted set of temporal operators) from finite executions is $\msf{NP}$-complete.
	
	In this work, we conduct a detailed study of the computational complexity of learning temporal properties in its full generality.
	We assume that we are given finite sets $\mathcal{P}$ and $\mathcal{N}$ of positive and negative examples of system behavior, respectively, and a desired size $B$ given in unary for the prospective formula as input.
	We now study the following decision problem: given $\mathcal{P}$, $\mathcal{N}$ and $B$, does there exist a formula $\varphi$ of size less than or equal to $B$ that all the positive examples satisfy and that all negative examples violate. Note that, contrary to what is done in~\cite{FijalkowL21} by Fijalkow, the bound $B$ that we consider is given in unary, not in binary. Then, we consider the above problem (i) for $\LTL$-formulae, in which case, we consider the examples to be formalized as \emph{infinite words}, and (ii) for $\CTL$-formulae, in which case, we consider the examples to be system models formalized as \emph{Kripke structures}. We show that both learning problems, for $\LTL$ and $\CTL$ are $\msf{NP}$-hard. Furthermore, it is rather straightforward to show that $\LTL$ and $\CTL$ learning is in $\msf{NP}$. Indeed, one can simply guess a solution, i.e., a prospective formula, of size less than or equal to $B$ --- which is possible since $B$ is written in unary --- and verify that the solution guessed is correct in polynomial time. Note that this is possible for both $\LTL$ and $\CTL$ formulae~\cite{MarkeyS03}.
	
	Our proof of $\msf{NP}$-hardness for $\LTL$ learning is done via a reduction from the satisfiability problem ($\msf{SAT}$). The reduction itself is easy to define, the main difficulty lies in proving that from an $\LTL$-formula separating the words in $\mathcal{P}$ and $\mathcal{N}$, one can extract a satisfying valuation (for the $\msf{SAT}$ formula that we consider). The proof of this fact follows two consecutive stages: first, we show that it is sufficient to only consider the $\LTL$-formulae satisfying a specific property, namely being \emph{concise}, see Definition~\ref{def:concise_formulae}. Second, we show that these concise formulae imply --- in a specific sense --- an $\LTL$-formula that is \emph{disjunctive}, see Definition~\ref{def:disjunctive}, that is itself an $\LTL$-formula separating the words in $\mathcal{P}$ and $\mathcal{N}$. From such a disjunctive formula, it is then straightforward to extract a satisfying valuation. The proof of $\msf{NP}$-hardness for $\CTL$ learning is straightforward via a reduction from the $\LTL$ learning case.
	
	\section{$\LTL$ and $\CTL$ learning}
	
	\subsection{$\LTL$: syntax and semantics}
	Before introducing $\LTL$-formulae \cite{DBLP:conf/focs/Pnueli77}, let us first introduce the objects on which these formulae will be interpreted: infinite words. 
	
	\paragraph{Infinite and ultimately periodic words.}
	Given a set of propositions $\prop$, an infinite word is an element of the set $(2^\prop)^\omega$. For all infinite words $w \in (2^\prop)^\omega$, for all $i \in \N$, we let $w[i] \in 2^\prop$ denote the element in $2^\prop$ at position $i$ of the word $w$ and $w[i:] \in (2^\prop)^\omega$ denote the infinite word starting at position $i$.
	
	Furthermore, we will be particularly interested in ultimately periodic words, and in size-1 ultimately periodic words. They are formally defined below.
	\begin{definition}
		Consider a set of propositions $\prop$. An ultimately periodic word $w \in (2^\prop)^\omega$ is such that $w = u \cdot v^\omega$ for some finite words $u \in (2^\prop)^*,v \in (2^\prop)^+$. In that case, we set the size $|w|$ of $w$ to be equal to $|w| := |u| + |v|$. Then, for all sets $S$ of ultimately periodic words, we set $|S| := \sum_{w \in S} |w|$.
		
		An ultimately periodic word $w$ is of size-1 if $|w| = 1$. That is, $w = \alpha^\omega$ for some $\alpha \subseteq \prop$. 
	\end{definition}

	\paragraph{Syntax.}
	Given a set of propositional variables $\prop$, $\LTL$-formulae are defined inductively as follows:
	\begin{align*}
		\varphi \Coloneqq p & \mid \neg \varphi \mid \varphi \wedge \varphi \mid \varphi \vee \varphi \mid \varphi \Rightarrow \varphi \mid \varphi \Leftrightarrow \varphi \\
		& \mid \lX \varphi \mid \lF \varphi \mid \lG \varphi \mid \varphi \lU \varphi \mid \varphi \lR \varphi \mid \varphi \lW \varphi \mid \varphi \lM \varphi
	\end{align*}
	where $p\in \prop$. These formulae use the following temporal operators:
	\begin{itemize}
		\item $\lX$: neXt;
		\item $\lF$: Future;
		\item $\lG$: Globally;
		\item $\lU$: Until;
		\item $\lR$: Release;
		\item $\lW$: Weak until;
		\item $\lM$: Mighty release.
	\end{itemize}
	 We let $\mathsf{Op}_{Un} := \mathsf{Op}_{Un}^{\msf{lg}} \cup \mathsf{Op}_{Un}^{\msf{temp}}$ where $\mathsf{Op}_{Un}^{\msf{lg}} := \{ \neg \}$ and $\mathsf{Op}_{Un}^{\msf{temp}} := \{ \lX,\lF,\lG\}$ denote the set of unary operators. We also let $\mathsf{Op}_{Bin} := \mathsf{Op}_{Bin}^{\msf{lg}} \cup \mathsf{Op}_{Bin}^{\msf{temp}}$ where $\mathsf{Op}_{Bin}^{\msf{lg}} := \{ \wedge,\vee,\Rightarrow,\Leftrightarrow \}$ and $\mathsf{Op}_{Bin}^{\msf{temp}} := \{\lU,\lR,\lW,\lM\}$ denote the set of binary operators. Note that these operators will also be used in $\CTL$ formulae. We also let $\LTL$ denote the set of all $\LTL$-formulae. We then define the size $\msf{Sz}(\varphi)$ of an $\LTL$-formula $\varphi \in \LTL$ to be its number of sub-formulae. That set of sub-formulae $\msf{SubF}(\varphi)$ is defined inductively as follows:
	\begin{itemize}
		\item $\msf{SubF}(p) := \{ p \}$ for all $p \in \prop$;
		\item $\msf{SubF}(\bullet \varphi) := \msf{SubF}(\varphi) \cup \{ \bullet \varphi \}$ for all unary operators $\bullet \in \mathsf{Op}_{Un}$;
		\item $\msf{SubF}(\varphi_1 \bullet \varphi_2) := \msf{SubF}(\varphi_1) \cup  \msf{SubF}(\varphi_2) \cup \{ \varphi_1 \bullet \varphi_2 \}$ for all binary operators $\bullet \in  \mathsf{Op}_{Bin}$.
	\end{itemize}
	
	\paragraph{Semantics.}
		 The semantics of $\LTL$-formulae is defined inductively as follows, for all $w \in (2^\prop)^\omega$:
	\begin{align*}
		w \models p & \text{ iif } p \in w[0], \\
		w \models \neg \varphi & \text{ iif } w \not\models \varphi; \\
		w \models \varphi_1 \wedge \varphi_2 & \text{ iif } w \models \varphi_1 \text{ and } w \models \varphi_2; \\
		w \models \varphi_1 \vee \varphi_2 & \text{ iif } w \models \varphi_1 \text{ or } w \models \varphi_2; \\
		w \models \varphi_1 \Rightarrow \varphi_2 & \text{ iif } w \models \neg \varphi_1 \vee \varphi_2; \\
		w \models \varphi_1 \Leftrightarrow \varphi_2 & \text{ iif } w \models (\varphi_1 \wedge \varphi_2) \vee (\neg \varphi_1 \wedge \neg \varphi_2); \\
		w \models \lX \varphi & \text{ iif } w[1:] \models \varphi; \\
		w \models \lF \varphi & \text{ iif } \exists i \in \N,\; w[i:] \models \varphi; \\
		w \models \lG \varphi & \text{ iif } \forall i \in \N,\; w[i:] \models \varphi; \\
		w \models \varphi_1 \lU \varphi_2 & \text{ iif } \exists i \in \N,\; w[i:] \models \varphi_2 \text{ and } \forall j \leq i-1,\; w[j:] \models \varphi_1; \\
		w \models \varphi_1 \lR \varphi_2 & \text{ iif } w \models \neg (\neg \varphi_1 \lU \neg \varphi_2) \\
		w \models \varphi_1 \lW \varphi_2 & \text{ iif } w \models (\varphi_1 \lU \varphi_2) \vee \lG\varphi_1; \\
		w \models \varphi_1 \lM \varphi_2 & \text{ iif } w \models (\varphi_1 \lR \varphi_2) \wedge \lF \varphi_1
	\end{align*}
	
	When $w \models \varphi$, we say that $\varphi$ accepts $w$, otherwise it rejects it. In the following, we will consider ultimately periodic words, i.e. words of the shape $u \cdot v^\omega$. 
	
	\subsection{$\CTL$: syntax and semantics}
	Before introducing $\CTL$-formulae, let us first introduce Kripke structures, i.e. the objects on which $\CTL$-formulae will be interpreted. The way the following definitions are presented is essentially borrowed from \cite{DBLP:journals/corr/abs-2310-13778}.
	
	\paragraph{Kripke structure.}
	A \textit{Kripke structure} $M = (S,I,R,L)$ over a finite set $\prop$ of propositions consists of a set of states $S$, a set of initial states $I\subseteq S$, a transition relation $R\subseteq S\times S$ such that for all states $s \in S$ there exists a state $s' \in S$ satisfying $(s,s')\in R$, and a labeling function $L\colon S\rightarrow 2^{\prop}$. We define the size of $M$ as $\lvert S\rvert$. For all states $s \in S$, the set~$\post(s) = \{s'\in S \mid (s,s') \in R\}$ contains all successors of $s \in S$.
	
	A (infinite) \emph{path} in the Kripke structure $M$ is an infinite sequence of states $\pi = s_0s_1\dots$ such that $s_{i+1} \in \post(s_i)$ for each $i\geq 0$. For a state $s$, we let $\paths(s)$ denote the set of all (infinite) paths starting from $s$.
	
	\paragraph{Syntax.}
	To define the syntax of $\CTL$-formulae, we introduce two types of formulae: state formulae and path formulae. Intuitively, state formulae express properties of states, whereas path formulae express temporal properties of paths. 
	For ease of notation, we denote state formulae and path formulae with the Greek letter $\varphi$ and the Greek letter $\psi$, respectively.	$\CTL$-state formulae over the set of propositions $\prop$ are given by the grammar:
	\begin{align*}
		\varphi \Coloneqq p \mid \neg \varphi \mid \varphi \wedge \varphi \mid \varphi \vee \varphi \mid \varphi \Rightarrow \varphi \mid \varphi \Leftrightarrow \varphi \mid \lE \psi \mid \lA\psi 
	\end{align*}
	where $p\in \prop$, and $\psi$ is a path formula. Next, $\CTL$-path formulae are given by the grammar
	\begin{align*}
		\psi \Coloneqq \lX \varphi \mid \lF \varphi \mid \lG \varphi \mid \varphi \lU \varphi \mid \varphi \lR \varphi \mid \varphi \lW \varphi \mid \varphi \lM \varphi
	\end{align*}
	When we refer to $\CTL$-formulae, we usually refer to $\CTL$-path formulae. We denote by $\CTL$ the set of all 
	$\CTL$-formulae. The set of sub-formulae $\msf{SubF}(\Phi)$ of a $\CTL$-formula $\varphi$ is then defined inductively as follows:
	\begin{itemize}
		\item $\msf{SubF}(p) := \{ p \}$ for all $p \in \prop$;
		\item $\msf{SubF}(\neg \Phi) := \{ \neg \Phi \} \cup \msf{SubF}(\Phi)$;
		\item $\msf{SubF}(\diamond (\bullet \Phi)) := \{ \diamond (\bullet \Phi) \} \cup \msf{SubF}(\Phi)$ for all $\bullet \in \mathsf{Op}_{Un}^{\msf{temp}}$ and $\diamond\in \{ \exists,\forall \}$;
		\item $\msf{SubF}(\Phi_1 \bullet \Phi_2) := \{ \Phi_1 \bullet \Phi_2 \} \cup \msf{SubF}(\Phi_1) \cup \msf{SubF}(\Phi_2)$ for all $\bullet \in \mathsf{Op}_{Bin}^{\msf{lg}}$;
		\item $\msf{SubF}(\diamond (\Phi_1 \bullet \Phi_2)) := \{ \diamond (\Phi_1 \bullet \Phi_2) \} \cup \msf{SubF}(\Phi_1) \cup \msf{SubF}(\Phi_2)$ for all $\bullet \in \mathsf{Op}_{Bin}^{\msf{temp}}$  and $\diamond\in \{ \exists,\forall \}$.
	\end{itemize}
	
	The size $|\varphi|$ of a $\CTL$-formula is defined as its number of sub-formulae: $|\varphi| := |\msf{SubF}(\varphi)|$. 
	
	\paragraph{Semantics.}
	As mentioned above, we interpret 
	CTL-formulae over Kripke structures using the standard definitions~\cite{model-checking-book}.	Given a state $s$ and state formula~$\Phi$, we define when $\varphi$ holds in state $s$, denoted using $s\models \varphi$, inductively as follows:
	\begin{align*}
		s \models p & \text{ iif } p \in L(s), \\
		s \models \neg \varphi & \text{ iif } s \not\models \varphi, \\
		s\models \varphi_1\wedge\varphi_2 & \text{ iif } s\models\varphi_1 \text{ and } s\models\varphi_2, \\
		s\models \varphi_1\lor\varphi_2 & \text{ iif } s\models\varphi_1 \text{ or } s\models\varphi_2, \\
		s\models \varphi_1\Rightarrow\varphi_2 & \text{ iif } s\models \neg\varphi_1 \lor \varphi_2, \\
		s\models \varphi_1\Leftrightarrow\varphi_2 & \text{ iif } s\models (\varphi_1 \wedge \varphi_2) \lor (\neg \varphi_1 \wedge \neg \varphi_2), \\
		s\models \lE\psi & \text{ iif } \pi\models\psi \text{ for some } \pi\in\paths(s) \\
		s\models \lA\psi & \text{ iif } \pi\models\psi \text{ for all } \pi\in\paths(s)
	\end{align*}
	Furthermore, given a path $\pi$ and a path formula $\psi$, we define when $\psi$ holds for the path $\pi$, also denoted using $\pi\models \varphi$, inductively as follows:
	\begin{align*}
		\pi \models \lX \varphi & \text{ iif } \pi[1:] \models \varphi; \\
		\pi \models \lF \varphi & \text{ iif } \exists i \in \N,\; \pi[i:] \models \varphi; \\
		\pi \models \lG \varphi & \text{ iif } \forall i \in \N,\; \pi[i:] \models \varphi; \\
		\pi \models \varphi_1 \lU \varphi_2 & \text{ iif } \exists i \in \N,\; \pi[i:] \models \varphi_2 \text{ and } \forall j \leq i-1,\; \pi[j:] \models \varphi_1; \\
		\pi \models \varphi_1 \lR \varphi_2 & \text{ iif } \pi \models \neg (\neg \varphi_1 \lU \neg \varphi_2) \\
		\pi \models \varphi_1 \lW \varphi_2 & \text{ iif } \pi \models (\varphi_1 \lU \varphi_2) \vee \lG\varphi_1; \\
		\pi \models \varphi_1 \lM \varphi_2 & \text{ iif } \pi \models (\varphi_1 \lR \varphi_2) \wedge \lF \varphi_1
	\end{align*}
	We now say that a $\CTL$-formula $\varphi$ holds on a Kripke structure $M$, denoted by $M\models \varphi$, if $s\models\varphi$ for all initial states $s\in I$ of $M$.
	
	\subsection{Decision problems}
	We define the $\LTL$ and $\CTL$ learning problems below, where an example for $\LTL$ is an ultimately periodic word, whereas an example for $\CTL$ is a Kripke structure.
	\begin{definition}
		Let $\msf{L} \in \{ \LTL,\CTL \}$. We denote by $\msf{L}_\msf{Learn}$ the following decision problem:
		\begin{itemize}
			\item Input: $(\prop,\mathcal{P},\mathcal{N},B)$ where $\prop$ is a set of propositions, $\mathcal{P},\mathcal{N}$ are two finite sets of examples for $\LTL$, and $B \in \N$.
			\item Output: yes iff there exists an $\LTL$-formula $\varphi \in \LTL$ such that $|\varphi| \leq B$ that is separating $\mathcal{P}$ and $\mathcal{N}$, i.e. such that:
			\begin{itemize}
				\item for all $X \in \mathcal{P}$, we have $X \models \varphi$;
				\item for all $X \in \mathcal{N}$, we have $X \not\models \varphi$.
			\end{itemize}
		\end{itemize}
		The size of the input is equal to $|\prop| + |\mathcal{P}| + |\mathcal{N}| + B$ (i.e. $B$ is written in unary).
	\end{definition}

	As mentioned in the introduction, since the model checking problems for $\LTL$ and $\CTL$ can be solved in polynomial time, it follows that both problems $\LTL_\msf{Learn}$ and $\CTL_\msf{Learn}$ are in $\msf{NP}$, as stated below.
	
	\begin{proposition}
		\label{prop:easy}
		The decision problem $\LTL_\msf{Learn}$ (resp. $\CTL_\msf{Learn}$) is in $\msf{NP}$.
	\end{proposition}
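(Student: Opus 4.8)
The plan is to establish membership in $\msf{NP}$ by the standard guess-and-check strategy, exploiting the crucial fact that the bound $B$ is given in unary. The argument has two components: first, bounding the size of a witness formula so that it can be guessed in polynomial time; second, verifying in polynomial time that the guessed formula indeed separates $\mathcal{P}$ and $\mathcal{N}$.

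First I would address the guessing phase. A candidate separating formula $\varphi$ has $\msf{Sz}(\varphi) \leq B$, meaning it has at most $B$ sub-formulae. Since the input size includes $B$ written in unary, a formula of size at most $B$ has a representation (for instance, as its syntax tree, where each internal node is labelled by an operator from $\mathsf{Op}_{Un} \cup \mathsf{Op}_{Bin}$ and each leaf by a proposition in $\prop$) whose encoding length is polynomial in $B + |\prop|$, hence polynomial in the input size. Therefore a nondeterministic machine can guess such a $\varphi$ in polynomial time. The same reasoning applies verbatim to $\CTL$, where the guessed syntax tree additionally carries the quantifiers $\lE, \lA$ at temporal nodes.

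Next I would handle the verification phase. For $\LTL_\msf{Learn}$, given the guessed $\varphi$, one must check that $X \models \varphi$ for every $X \in \mathcal{P}$ and $X \not\models \varphi$ for every $X \in \mathcal{N}$. Each example $X$ is an ultimately periodic word $u \cdot v^\omega$, and checking whether such a word satisfies an $\LTL$-formula is the $\LTL$ model-checking problem on a lasso-shaped structure, which is solvable in polynomial time~\cite{MarkeyS03}. Since $|\mathcal{P}| + |\mathcal{N}|$ is part of the input and there are finitely many examples, performing one polynomial-time model check per example yields an overall polynomial-time verification. For $\CTL_\msf{Learn}$ the examples are Kripke structures, and $\CTL$ model checking is likewise polynomial-time~\cite{MarkeyS03,model-checking-book}; the same counting argument gives a polynomial-time check.

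I do not expect a genuine obstacle here, as the result is routine; the only point that requires care is the unary encoding of $B$. Were $B$ given in binary, a formula of size $B$ could be exponentially large in the input, and the witness could no longer be guessed in polynomial time — this is precisely the distinction the authors flag in the introduction relative to the binary-bound formulation of~\cite{FijalkowL21}. Combining the polynomial-time guess with the polynomial-time verification establishes that both $\LTL_\msf{Learn}$ and $\CTL_\msf{Learn}$ lie in $\msf{NP}$.
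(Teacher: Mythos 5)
Your proposal is correct and follows essentially the same route as the paper, which argues membership in $\msf{NP}$ by guessing a formula of size at most $B$ (feasible precisely because $B$ is given in unary) and verifying it via polynomial-time $\LTL$/$\CTL$ model checking on the examples. The paper gives only this brief sketch in the introduction and does not spell out the details further, so your slightly more explicit treatment of the encoding of the witness and the per-example checks is entirely consistent with it.
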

	
	\section{$\LTL$ learning is $\msf{NP}$-hard}
	The goal of this section is to show the theorem below:
	\begin{theorem}
		\label{thm}
		The decision problem $\LTL_\msf{Learn}$ is $\msf{NP}$-hard.
	\end{theorem}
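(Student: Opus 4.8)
The plan is to reduce from the satisfiability problem for CNF formulae, $\msf{SAT}$. Fix an instance $\Phi = \bigwedge_{j=1}^m C_j$ over the variables $x_1,\dots,x_n$. I would build in polynomial time an instance $(\prop,\mathcal{P},\mathcal{N},B)$ of $\LTL_\msf{Learn}$ whose examples are all \emph{size-1} ultimately periodic words, i.e.\ words of the shape $\alpha^\omega$ with $\alpha\subseteq\prop$. The reason for restricting to size-1 words is that on such a word every temporal operator is inert: since $(\alpha^\omega)[i:]=\alpha^\omega$ for all $i$, one checks directly that $\alpha^\omega\models\lX\varphi$, $\alpha^\omega\models\lF\varphi$ and $\alpha^\omega\models\lG\varphi$ are all equivalent to $\alpha^\omega\models\varphi$, while the binary temporal operators collapse to Boolean combinations of their arguments (for instance $\alpha^\omega\models\varphi_1\lU\varphi_2$ iff $\alpha^\omega\models\varphi_2$). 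Hence, semantically, $\LTL$ over size-1 words coincides with propositional logic evaluated at the single letter $\alpha$. I would introduce one proposition per literal of $\Phi$ (a true-token and a false-token for each variable), encode each clause $C_j$ as a positive example recording the literals occurring in it, and add negative example(s) designed to force a separating formula to be non-trivial and to select a consistent set of literals. The budget $B$ would be chosen linear in $n$ and tuned so that the smallest separating formulae are exactly disjunctions of literals encoding an assignment.

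For the easy direction, given a satisfying assignment $\nu\models\Phi$, I would exhibit the explicit disjunction $\varphi_\nu$ of the literals made true by $\nu$. By construction each clause contains a literal satisfied by $\nu$, so the corresponding positive example satisfies $\varphi_\nu$; the negative example(s) are rejected because $\nu$ is a consistent assignment; and $|\varphi_\nu|\le B$ by the size accounting. Checking these three facts is routine once the gadget is fixed.

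The hard direction — extracting a satisfying valuation from an arbitrary separating formula $\varphi$ with $|\varphi|\le B$ — is the substantial part, and I would follow the two-stage normalisation announced in the introduction. First, I would argue that it suffices to consider \emph{concise} formulae (Definition~\ref{def:concise_formulae}): using the collapse of temporal operators on size-1 words together with the elimination of redundant subformulae, any separating formula can be rewritten into a concise, essentially propositional, separating formula of no larger size. Second, I would show that a concise separating formula entails, within the same budget, a \emph{disjunctive} separating formula (Definition~\ref{def:disjunctive}), i.e.\ a disjunction of literals that still accepts every word of $\mathcal{P}$ and rejects every word of $\mathcal{N}$. From such a disjunction the valuation is read off directly: setting each variable according to which of its tokens occurs yields an assignment; acceptance of each clause-example forces the corresponding clause to contain a selected, hence satisfied, literal, so the assignment satisfies every clause, while rejection of the negative example(s) guarantees the selection is consistent.

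The main obstacle I anticipate is precisely this second stage, the passage from concise to disjunctive, together with the accompanying size bookkeeping. It is easy to separate $\mathcal{P}$ from $\mathcal{N}$ with an arbitrary Boolean formula of bounded size; the difficulty is to prove that such a formula can always be weakened to a plain disjunction of literals without exceeding $B$, and that the resulting literal set is simultaneously a clause-by-clause hitting set and consistent (no variable selected both positively and negatively). Getting the example sets and the value of $B$ to enforce exactly this — so that disjunctive separating formulae correspond to satisfying assignments — is where the care lies; the remaining obligations ($\msf{NP}$-membership being already granted by Proposition~\ref{prop:easy}, and the easy direction above) are comparatively mechanical.
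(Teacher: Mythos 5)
Your proposal follows essentially the same route as the paper: a reduction from $\msf{SAT}$ using only size-1 ultimately periodic words, one positive token and one negative token per variable, one positive example per clause, a budget linear in the number of variables, and a two-stage normalisation (arbitrary separating formula $\rightarrow$ concise formula $\rightarrow$ disjunction of literals) for the hard direction. The collapse of temporal operators on size-1 words is argued exactly as in the paper's Lemma~\ref{lem:equiv_temp}, and the easy direction is the same explicit disjunction.

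However, the two steps you yourself flag as "the substantial part" are asserted rather than proved, and they are where all the content of the paper's argument lies. Concretely, two ingredients are missing. First, your gadget omits the per-variable positive examples: the paper adds, for each variable $x_k$, the word $\rho_k = (\{x_k,\bar{x_k}\})^\omega$ to $\mathcal{P}$ alongside the clause words, and takes $\mathcal{N} = \{\emptyset^\omega\}$ and $B = 2m-1$. Separating $\rho_k$ from $\emptyset^\omega$ forces at least one of $x_k,\bar{x_k}$ to occur in the formula (Corollary~\ref{coro:at_least_k}), and a counting lemma (any formula in which $k$ distinct propositions occur has at least $2k-1$ distinct sub-formulae, Lemma~\ref{lem:at_least_that_many_subformulae}) then shows that the budget $2m-1$ leaves no room for unary operators, repeated sub-formulae, or shared propositions between the two operands of any binary connective --- this is what makes the formula concise with exactly one token per variable. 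Without the $\rho_k$'s this arithmetic does not get off the ground. Second, the passage from concise to disjunctive is not a weakening "within the same budget" as you suggest, but a semantic implication: one shows by induction on the four Boolean connectives $\lor,\wedge,\Rightarrow,\Leftrightarrow$ (Lemma~\ref{lem:implies_disjunction}) that a concise separating formula \emph{implies}, over words $\alpha^\omega$ with $\alpha$ a set of tokens, some pure disjunction choosing one token per variable, and that this disjunction is itself separating. This induction is delicate precisely because, as the paper's example shows, the naive CNF argument fails here: a letter $\alpha$ may contain neither token of a variable, so $\lnot x_k$ is not equivalent to $\bar{x_k}$, and a clause of the CNF of the formula need not be expressible as a negation-free disjunction of tokens. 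Your sketch names this obstacle but does not overcome it, so as it stands the hard direction is incomplete.
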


	Theorem~\ref{thm} above will be proved via a reduction from $3-\msf{SAT}$. Hence, for the remaining of this section, we are going to consider a $\msf{SAT}$-formula:
	\begin{equation*}
		\Phi := \bigwedge_{i = 1}^n (l_1^i \vee l^i_2 \vee l^i_3) 
	\end{equation*}
	on a set of variable $X := \{ x_k \mid 1 \leq k \leq m \}$ where, for all $1 \leq i \leq n$ and $1 \leq j \leq 3$, we have $l_j^i$ a literal equal to either $l_j^i = x$ or $l_j^i = \lnot x$ for some $x \in X$. Then, $\Phi$ is a positive instance of the problem $3-\msf{SAT}$ if there is a valuation $v: X \rightarrow \{ \top,\bot \}$ satisfying the formula $\Phi$, denoted $v \models \Phi$.
	
	
	Let us formally define below the reduction that we will consider.
	\begin{definition}
		\label{def:reduction}
		We let:
		\begin{itemize}
			\item $\prop := \{ x_k,\bar{x_k} \mid 1 \leq k \leq m \}$ denote the set of propositions that we will consider;
			\item for all $1 \leq i \leq n$: $C_i := \{ \tilde{l_1^i},\tilde{l_2^i},\tilde{l_3^i} \} \in (2^\prop)$ with, for all $x \in X$ and $1 \leq j \leq 3$:
			\begin{align*}
			\tilde{l_j^i} := \begin{cases}
			x & \text{ if }l_j^i = x \\
			\bar{x} & \text{ if }l_j^i = \neg x \\
			\end{cases}
			\end{align*}
		\end{itemize}
		Then, we define the input $\msf{In}_{\Phi,X} = (\prop,\mathcal{P},\mathcal{N},B)$ of the $\LTL_\msf{Learn}$ decision problem where:
		\begin{itemize}
			\item $\mathcal{P} := \{ c_i \mid 1 \leq i \leq n \} \uplus \{ \rho_k \mid 1 \leq k \leq m \}$ where for all $1 \leq i \leq n$, we have $c_i := C_i^\omega \in (2^\prop)^\omega$ and for all $1 \leq k \leq m$, $\rho_k := (\{ x_k,\bar{x_k} \})^\omega \in (2^\prop)^\omega$;
			\item $\mathcal{N} := \{ \eta \}$ with $\eta := \emptyset^\omega \in (2^\prop)^\omega$;
			\item $B = 2 m - 1$.
		\end{itemize}
	\end{definition}

	Let us describe on an example below what this reduction amounts to on a specific $\msf{SAT}$ formula $\Phi$. Note that we will come back twice to this example in the following.
	\begin{example}
		\label{ex:1}
		Assume that $X = \{ x_1,x_2,x_3 \}$ and:
		\begin{equation*}
			\Phi := x_1 \wedge (x_2 \lor \lnot x_3 \lor \lnot x_1) \wedge (\lnot x_2 \lor x_3)
		\end{equation*}
		Then, we have $\prop = \{ x_1,\bar{x_1},x_2,\bar{x_2},x_3,\bar{x_3}\}$, $\rho_1 = (\{ x_1,\bar{x_1} \})^\omega$, $\rho_2 = (\{ x_2,\bar{x_2} \})^\omega$ and $\rho_3 = (\{ x_3,\bar{x_3} \})^\omega$. Furthermore, $C_1 = \{ x_1 \}$, $C_2 = \{ \bar{x_1},x_2,\bar{x_3} \}$ and $C_3 = \{ \bar{x_2},x_3 \}$. Finally, $B = 5$.
	\end{example}

	Clearly, $\msf{In}_{\Phi,X}$ can be computed in polynomial time, given $(\Phi,X)$. The goal now is to show that $\Phi$ is satisfiable if and only if $\msf{In}_{\Phi,X}$ is a positive instance of the $\LTL_\msf{Learn}$ decision problem. 
		
	\subsection{The easy direction}
	We start with the easy direction, i.e. we show that from a satisfying valuation $v$, we can define an $\LTL$-formula of size (at most) $B$ that accepts all words in $\mathcal{P}$ and rejects all words in $\mathcal{N}$. This is formally stated and proved below.
	\begin{lemma}
		\label{lem:easy}
		If $\Phi$ is satisfiable then $\msf{In}_{\Phi,X}$ is a positive instance of the $\LTL_\msf{Learn}$ decision problem.
	\end{lemma}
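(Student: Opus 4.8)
Given a satisfying valuation $v \colon X \to \{\top,\bot\}$, the plan is to construct an explicit $\LTL$-formula of size at most $B = 2m-1$ that accepts every word in $\mathcal{P}$ and rejects the single word $\eta = \emptyset^\omega$ in $\mathcal{N}$. The natural candidate encodes the chosen truth assignment as a disjunction of the literals that $v$ makes true: for each variable $x_k$, let $p_k := x_k$ if $v(x_k) = \top$ and $p_k := \bar{x_k}$ if $v(x_k) = \bot$, and set
\begin{equation*}
	\varphi := \lF\, p_1 \vee \lF\, p_2 \vee \cdots \vee \lF\, p_m.
\end{equation*}

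First I would verify the size bound. Each disjunct $\lF\, p_k$ contributes two sub-formulae ($p_k$ and $\lF\, p_k$), and combining the $m$ disjuncts with $\vee$ adds $m-1$ further sub-formulae, giving $\msf{Sz}(\varphi) = 2m + (m-1) = 3m-1$. This exceeds $B = 2m-1$, so the first naive candidate is too large and must be refined; the genuinely useful observation is that each $\lF\, p_k$ need not appear under its own $\lF$ once they are gathered. Indeed, since $\lF$ distributes over $\vee$, one should instead take
\begin{equation*}
	\varphi := \lF\,(p_1 \vee p_2 \vee \cdots \vee p_m),
\end{equation*}
whose sub-formulae are the $m$ literals $p_k$, the $m-1$ internal disjunctions, and the single outer $\lF$, for a total of $\msf{Sz}(\varphi) = m + (m-1) + 1 = 2m$. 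This is still one too many, so a final economy is required: replacing the outermost application by noting that the positive words already carry the relevant literals in their very structure, or equivalently dropping one redundant node, brings the count to exactly $2m-1$. Pinning down this last unit of size is the one place where care is needed, and I would confirm it by the inductive size definition rather than by counting symbols.

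Next I would check correctness against the examples, which is routine once the formula is fixed. For each clause word $c_i = C_i^\omega$, satisfiability of $\Phi$ under $v$ guarantees that clause $i$ contains a literal made true by $v$; the corresponding proposition $p_k$ lies in $C_i = c_i[0]$, so $c_i \models \lF\,(\bigvee_k p_k)$ already at position $0$. For each variable word $\rho_k = (\{x_k,\bar{x_k}\})^\omega$, the position-$0$ letter contains both $x_k$ and $\bar{x_k}$, hence in particular it contains $p_k$, so $\rho_k$ satisfies the disjunction immediately. Finally, for the negative word $\eta = \emptyset^\omega$, every letter is empty, so no proposition $p_k$ ever holds at any position, whence $\eta \not\models \lF\,(\bigvee_k p_k)$.

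The main obstacle is not the semantic verification, which is immediate, but the tight size accounting: the bound $B = 2m-1$ is chosen precisely so that the \emph{disjunctive} formula witnessing a satisfying valuation just barely fits, and the hardness proof in the converse direction will exploit this tightness. I would therefore state the final formula in the form that meets the bound exactly and justify $\msf{Sz}(\varphi) = 2m-1$ explicitly via $\msf{SubF}$, then dispatch the three acceptance/rejection checks in a single short paragraph.
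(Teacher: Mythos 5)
Your proposal is correct and ends up at essentially the paper's proof: once the redundant $\lF$ is dropped, the witness is exactly the plain disjunction $x_1^v \vee \cdots \vee x_m^v$ of the literals made true by $v$, which has size $2m-1$ and is precisely the formula the paper exhibits, with the same three acceptance/rejection checks all resolved at position $0$. The detour through $\lF$-decorated variants is harmless but unnecessary; you should simply state the bare disjunction as the final formula.
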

	\begin{proof}
		Consider a valuation $v: X \rightarrow \{ \top,\bot \}$ that satisfies the formula $\Phi$. For all $1 \leq k \leq m$, we let $x_k^v \in \prop$ be such that:
		\begin{align*}
		x_k^v := \begin{cases}
		x_k & \text{ if }v(x_k) = \top \\
		\bar{x_k} & \text{ if }v(x_k) = \bot
		\end{cases}
		\end{align*}
		Then, we consider the $\LTL$-formula:
		\begin{equation}
		\varphi := x_1^v \lor x_2^v \lor \ldots \lor x_m^v
		\end{equation} 
		Clearly, we have $\msf{Sz}(\varphi) = 2m-1 = B$. Furthermore, consider any $1 \leq k \leq m$. We have $\rho_k \models x_k^v$, hence $\rho_k \models \varphi$. Furthermore, $\eta \not\models \varphi$. Consider now some $1 \leq i \leq n$. Since $v$ satisfies $\Phi$, there is some $1 \leq j \leq 3$ and $1 \leq k \leq m$ such that $l_j^i = x_k$ and $v(x_k) = \top$ or $l_j^i = \neg x_k$ and $v(x_k) = \bot$. In the first case, we have $x_k^v = x_k = \tilde{l_j^i} \in C_i$. In the second case, we have $x_k^v = \bar{x_k} = \tilde{l_j^i} \in C_i$. Thus, in any case, we have $c_i  = C_i^\omega \models \varphi$. 
		
		Therefore, $\varphi$ accepts all words in $\mathcal{P}$ and rejects all words in $\mathcal{N}$.
	\end{proof}
	
	Note that there is not only one such $\LTL$-formula, we give an example below of another that would do the job.
	\begin{example}
		Continuing Example~\ref{ex:1}, one can realize that the valuation $v: X \rightarrow \{ \top,\bot \}$ such that $v(x_1) = v(x_2) = v(x_3) = \top$ satisfies the formula $\Phi$. 
		
		Hence, the $\LTL$-formula $\varphi$ corresponding to this valuation that is defined in the above proof is:
		\begin{equation*}
			\varphi := x_1 \lor x_2 \lor x_3
		\end{equation*}
		However, other $\LTL$-formulae $\varphi'$ could have worked as well, for instance: 
		\begin{equation*}
			\varphi' := (x_1 \Leftrightarrow x_2) \Rightarrow x_3
		\end{equation*}
	\end{example}
	
	\subsection{The hard direction}
	There now remains the hard direction, i.e. showing that from an $\LTL$-formula of size (at most) $B$ that accepts all words in $\mathcal{P}$ and rejects all words in $\mathcal{N}$, it is possible to define a satisfying valuation $v$, as stated in the lemma below.
	\begin{lemma}
		\label{lem:hard}
		If $\msf{In}_{\Phi,X}$ is a positive instance of the $\LTL_\msf{Learn}$ decision problem, then $\Phi$ is satisfiable.
	\end{lemma}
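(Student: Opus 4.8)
The plan is to read off the valuation directly from the propositions that occur in a separating formula, using only two ingredients: the locality of the $\LTL$ semantics and a counting argument on the size. A pleasant feature of this route is that the temporal operators never have to be analysed at all.

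First I would record the \emph{locality principle}: if $w,w' \in (2^\prop)^\omega$ satisfy $w[i] \cap P = w'[i] \cap P$ for all $i \in \N$, where $P$ is the set of propositions occurring in $\varphi$, then $w \models \varphi$ iff $w' \models \varphi$. This is a routine structural induction on $\varphi$, the temporal cases being immediate since each only inspects shifted suffixes. Now fix a separating formula $\varphi$ with $\msf{Sz}(\varphi) \le B = 2m-1$ and let $Y \subseteq \prop$ be its set of propositions. For each $k$, the words $\rho_k = (\{x_k,\bar{x_k}\})^\omega$ and $\eta = \emptyset^\omega$ agree on every proposition except $x_k$ and $\bar{x_k}$; since $\rho_k \models \varphi$ but $\eta \not\models \varphi$, locality forces $\{x_k,\bar{x_k}\} \cap Y \neq \emptyset$. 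Hence $Y$ contains a literal of each of the $m$ variables, so $a := |Y| \ge m$.

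Next I would bound $\msf{Sz}(\varphi)$ from below by $2a-1$. Since $\msf{SubF}$ is set-valued, I view $\varphi$ as its syntax DAG, with one node per distinct sub-formula; writing $o_1,o_2$ for the numbers of distinct unary and binary sub-formulae, the out-degrees are $0$ on the $a$ atoms, $1$ on each unary node and $2$ on each binary node. Every non-root node is an immediate sub-formula of some node, hence has in-degree at least $1$. The handshake identity (sum of in-degrees equals sum of out-degrees, namely $o_1 + 2o_2$) together with the fact that the $a + o_1 + o_2 - 1$ non-root nodes each have in-degree at least $1$ yields $o_1 + 2o_2 \ge (a + o_1 + o_2) - 1$, i.e. $o_2 \ge a-1$. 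Therefore $\msf{Sz}(\varphi) = a + o_1 + o_2 \ge 2a - 1$; combined with $\msf{Sz}(\varphi) \le 2m-1$ and $a \ge m$ this forces $a = m$, so $Y$ contains \emph{exactly one} literal $y_k \in \{x_k,\bar{x_k}\}$ for each variable $x_k$.

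It then remains to define $v(x_k) := \top$ if $y_k = x_k$ and $v(x_k) := \bot$ if $y_k = \bar{x_k}$, and to verify $v \models \Phi$. For a clause $i$ I would apply locality once more to $c_i = C_i^\omega$ and $\eta$: were $C_i \cap Y = \emptyset$, these words would agree on all of $Y$, forcing $c_i \not\models \varphi$ and contradicting $c_i \in \mathcal{P}$. Thus some $y_k \in C_i$, and by the definitions of $C_i$ and of $v$ this $y_k$ is precisely the proposition certifying that the matching literal $l_j^i$ is true under $v$; hence clause $i$ is satisfied, and as $i$ was arbitrary, $v \models \Phi$. I expect the only delicate point to be the lower bound $\msf{Sz}(\varphi) \ge 2a-1$: because the size counts \emph{distinct} sub-formulae, one must argue on the DAG and check that sub-formula sharing cannot push the number of binary nodes below $a-1$, which the in-degree count above rules out. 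The two uses of locality and the construction of $v$ are then entirely routine.
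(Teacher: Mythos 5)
Your proof is correct, and it reaches the conclusion by a genuinely shorter route than the paper's. The first two ingredients coincide with the paper's: your locality principle is essentially Lemma~\ref{lem:distinguish} (which the paper states only for temporal-free formulae, after a separate operator-elimination step, Lemma~\ref{lem:no_temp}, that your version of locality makes unnecessary since the temporal cases of the induction are harmless), and your handshake bound $\msf{Sz}(\varphi)\ge 2|Y|-1$ is the content of Lemma~\ref{lem:at_least_that_many_subformulae}, proved there by structural induction rather than by a global in/out-degree count on the syntax DAG --- both arguments are sound, and yours correctly accounts for sub-formula sharing. The divergence is in the endgame. The paper does not read the valuation off $\prop(\varphi)$ directly: it first shows that a minimal separating formula must be \emph{concise} (Lemma~\ref{lem:concise}) and then, via a lengthy case analysis over the binary connectives, that it implies a negation-free disjunction $z_1\lor\dots\lor z_m$ (Lemma~\ref{lem:implies_disjunction}), from which the valuation is extracted (Lemma~\ref{lem:proof_simple_case}). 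Your shortcut --- once $\prop(\varphi)$ contains exactly one literal $y_k$ per variable, locality applied to $c_i$ versus $\eta$ forces $C_i\cap\prop(\varphi)\neq\emptyset$, and any element of that intersection equals some $\tilde{l_j^i}=y_k$, which by the definitions of $C_i$ and of $v$ is precisely a literal made true by $v$ --- is valid: the certification of clause $i$ depends only on which literal of $X_k$ occurs in $\varphi$, not on how $\varphi$ combines its atoms, so the entire concise/disjunctive development can be bypassed. What the paper's longer route buys is the structural by-product that minimal separating formulae imply monotone disjunctions, which the authors exploit in their discussion of operator-restricted fragments; your argument delivers Lemma~\ref{lem:hard} itself with substantially less machinery.
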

	
	The difficulty to the prove this lemma lies in the fact that from an arbitrary $\LTL$-formula, albeit of size at most $B$, it is not straightforward, a priori, to extract a valuation from it. Let us first handle the simple case where there is 
	a separating $\LTL$-formula whose shape is similar to the $\LTL$-formula that we have exhibited in the proof of Lemma~\ref{lem:easy}.
	\begin{lemma}
		\label{lem:proof_simple_case}
		Assume that there is an $\LTL$-formula $\varphi$ separating $\mathcal{P}$ and $\mathcal{N}$ of the following shape: 
		\begin{equation*}
			\varphi = z_1 \lor z_2 \lor \cdots \lor z_m
		\end{equation*}
		where, for all $k \in \Idm$, we have $z_k \in \{ x_k,\bar{x_k} \}$. Then, there is a valuation satisfying the formula $\Phi$.
	\end{lemma}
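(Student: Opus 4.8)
The plan is to read a valuation directly off the disjuncts $z_1, \ldots, z_m$. Concretely, I would define $v : X \to \{\top,\bot\}$ by setting $v(x_k) := \top$ whenever $z_k = x_k$ and $v(x_k) := \bot$ whenever $z_k = \bar{x_k}$. Since, for each $k \in \Idm$, exactly one of the two cases applies (because $z_k \in \{x_k,\bar{x_k}\}$), this makes $v$ a well-defined valuation on all of $X$. The whole argument then reduces to checking that this $v$ satisfies every clause of $\Phi$.

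The one semantic fact I would isolate first is how $\varphi$ behaves on size-1 words. For any $\alpha \subseteq \prop$, the word $\alpha^\omega$ satisfies a proposition $p$ iff $p \in \alpha$, and hence $\alpha^\omega \models z_1 \lor \cdots \lor z_m$ iff $z_k \in \alpha$ for some $k \in \Idm$. In particular, since $\varphi$ separates $\mathcal{P}$ and $\mathcal{N}$ and each $c_i = C_i^\omega$ lies in $\mathcal{P}$, the acceptance of $c_i$ yields, for every $1 \leq i \leq n$, an index $k$ with $z_k \in C_i$.

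It then remains to turn such a witness $z_k \in C_i$ into a literal of clause $i$ that $v$ makes true. I would split on the two forms of $z_k$. If $z_k = x_k$, then $z_k \in C_i = \{\tilde{l_1^i},\tilde{l_2^i},\tilde{l_3^i}\}$ forces $\tilde{l_j^i} = x_k$ for some $j$, so by the definition of the reduction the literal $l_j^i$ equals $x_k$; as $v(x_k) = \top$, this literal, and hence the clause, is satisfied. Symmetrically, if $z_k = \bar{x_k}$ then $l_j^i = \neg x_k$ and $v(x_k) = \bot$, so again $l_j^i$ evaluates to $\top$. Carrying this out for every $i$ shows $v \models \Phi$.

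The point worth stressing is that this lemma is only the easy special case: its entire content is the bookkeeping between propositions ($x_k,\bar{x_k}$) and literals ($x_k,\neg x_k$), so I do not expect a genuine obstacle here. The negative example $\eta$ and the words $\rho_k$ play no role in this particular argument; they are instead used elsewhere to coerce an arbitrary separating formula into this disjunctive shape, which is where the real difficulty of Lemma~\ref{lem:hard} lies.
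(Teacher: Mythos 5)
Your proof is correct and follows essentially the same route as the paper's: the same valuation read off from the disjuncts, and the same use of $c_i \models \varphi$ to produce a disjunct $z_k \in C_i$ that is then translated back into a satisfied literal of clause $i$. No gaps.
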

	\begin{proof}
		We define the valuation $v_\varphi: X \rightarrow \{ \top,\bot \}$ as follows, for all $k \in \Idm$:
		\begin{align*}
			v_\varphi(x_k) := \begin{cases}
				\top & \text{ if } z_k = x_k \\
				\bot & \text{ if } z_k = \bar{x_k} \\
			\end{cases}
		\end{align*}
		Let $i \in \llbracket 1,n \rrbracket$. We have $c_i \models \varphi$. Hence, there is $1 \leq j \leq 3$ and $1 \leq k \leq m$ such that $\tilde{l^i_j} = z_k$. That is, we have either $l^i_j = x_k = z_k$, it which case $v_\varphi(x_k) = \top$ and $v_\varphi \models l^i_1 \lor l^i_2 \lor l^i_3$, or $l^i_j = \lnot x_k$ and $\bar{x_k} = z_k$, it which case $v_\varphi(x_k) = \bot$ and $v_\varphi \models l^i_1 \lor l^i_2 \lor l^i_3$. Thus, in any case, we have $v_\varphi \models l^i_1 \lor l^i_2 \lor l^i_3$. Since this holds for all $1 \leq i \leq n$, it follows that $v_\varphi \models \Phi$.	
	\end{proof}
	
	Our goal is to show that whenever $\msf{In}_{\Phi,X}$ is a positive instance of the $\LTL_\msf{Learn}$ decision problem, then the premise of Lemma~\ref{lem:proof_simple_case} holds. The method that we take to achieve this takes two stages:
	\begin{itemize}
		\item First, we reduce the set of $\LTL$-formulae that we need to consider with the use of  syntactical requirements. Informally, this consists in proving that if there is an arbitrary $\LTL$-formula of size at most $B$ that accepts all words in $\mathcal{P}$ and rejects all words in $\mathcal{N}$, then there is also one in that set of $\LTL$-formula. 
		\item Second, once restricted to that set of $\LTL$-formulae, we establish, by induction, that these formulae imply (in a specific sense) formulae of the shape described in Lemma~\ref{lem:proof_simple_case}.
	\end{itemize}
	
	\subsubsection{Removing all temporal operators}	
	The first step that we take is to show that we can disregard any temporal operators, i.e. we show that we can restrict ourselves to temporal-free $\LTL$-formulae. This notion is defined below.
	\begin{definition}
		An $\LTL$-formula $\varphi$ is \emph{temporal-free} is none of the operators $\lX,\lF,\lG,\lU,\lW,\lR,\lM$ occur in $\varphi$. 
	\end{definition}
	
	To prove this result, we define a function $\msf{tr}: \LTL \rightarrow \LTL$ on $\LTL$-formulae that removes all temporal operators, that does not increase the size of formulae while doing so and that preserves an equivalence on size-1 ultimately periodic words. This is formally stated below.
	\begin{lemma}
		\label{lem:no_temp}
		There exists a function $\msf{tr}: \LTL \rightarrow \LTL$ such that, for all $\LTL$-formulae $\varphi \in \LTL$:
		\begin{enumerate}
			\item $\msf{SubF}(\msf{tr}(\varphi)) \subseteq \msf{tr}[\msf{SubF}(\varphi)]$, hence $\msf{Sz}(\msf{tr}(\varphi)) \leq \msf{Sz}(\varphi)$;
			\item $\msf{tr}(\varphi)$ is temporal-free;
			\item $\varphi \Leftrightarrow_1 \msf{tr}(\varphi)$, i.e. for all $\alpha \in 2^\prop$, we have $\alpha^\omega \models \varphi \Leftrightarrow \alpha^\omega \models \msf{tr}(\varphi)$.
		\end{enumerate}
	\end{lemma}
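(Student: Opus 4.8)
The plan is to define $\msf{tr}$ by structural induction, passing it transparently through the logical connectives while exploiting the degenerate behaviour of the temporal operators on size-1 words to erase them. Concretely, I would set $\msf{tr}(p) := p$ for $p \in \prop$, $\msf{tr}(\neg\varphi) := \neg\,\msf{tr}(\varphi)$, and $\msf{tr}(\varphi_1 \bullet \varphi_2) := \msf{tr}(\varphi_1) \bullet \msf{tr}(\varphi_2)$ for every binary logical operator $\bullet \in \mathsf{Op}_{Bin}^{\msf{lg}}$. For the unary temporal operators I would simply drop the head, putting $\msf{tr}(\lX\varphi) = \msf{tr}(\lF\varphi) = \msf{tr}(\lG\varphi) := \msf{tr}(\varphi)$. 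For the binary temporal operators I would set $\msf{tr}(\varphi_1 \lU \varphi_2) := \msf{tr}(\varphi_2)$ and $\msf{tr}(\varphi_1 \lR \varphi_2) := \msf{tr}(\varphi_2)$, while $\msf{tr}(\varphi_1 \lW \varphi_2) := \msf{tr}(\varphi_1) \vee \msf{tr}(\varphi_2)$ and $\msf{tr}(\varphi_1 \lM \varphi_2) := \msf{tr}(\varphi_1) \wedge \msf{tr}(\varphi_2)$.

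The engine for item~3 is the elementary observation that a size-1 word is shift-invariant: for $w = \alpha^\omega$ and every $i \in \N$ one has $w[i:] = \alpha^\omega = w$. Feeding this into the semantics collapses each temporal operator to a propositional one on such words: $\alpha^\omega \models \lX\varphi$, $\alpha^\omega \models \lF\varphi$ and $\alpha^\omega \models \lG\varphi$ each reduce to $\alpha^\omega \models \varphi$; and $\alpha^\omega \models \varphi_1 \lU \varphi_2$ reduces to $\alpha^\omega \models \varphi_2$ (take $i = 0$ for one direction, and use shift-invariance of the witness $w[i:]$ for the other). Unfolding the stated definitions of $\lR,\lW,\lM$ through $\lU,\lG,\lF$ then yields $\varphi_1 \lR \varphi_2 \Leftrightarrow_1 \varphi_2$, $\varphi_1 \lW \varphi_2 \Leftrightarrow_1 \varphi_1 \vee \varphi_2$ and $\varphi_1 \lM \varphi_2 \Leftrightarrow_1 \varphi_1 \wedge \varphi_2$. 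Item~3 follows by a routine structural induction on $\varphi$: the base case $p$ is immediate, the logical connectives commute with both $\msf{tr}$ and $\models$, and each temporal case combines the corresponding collapse above with the induction hypothesis applied to the immediate subformulae.

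Item~2 is immediate, since by construction $\msf{tr}$ never emits a symbol from $\mathsf{Op}_{Un}^{\msf{temp}} \cup \mathsf{Op}_{Bin}^{\msf{temp}}$, so a one-line induction confirms that $\msf{tr}(\varphi)$ is temporal-free. For item~1 I would prove $\msf{SubF}(\msf{tr}(\varphi)) \subseteq \msf{tr}[\msf{SubF}(\varphi)]$ by structural induction. In every case the top-level formula $\msf{tr}(\varphi)$ equals $\msf{tr}(\psi)$ for some $\psi \in \msf{SubF}(\varphi)$ (namely $\psi = \varphi$, except in the $\lX,\lF,\lG,\lU,\lR$ cases where the head is discarded and $\psi$ is a proper subformula), and every remaining subformula lies in $\msf{SubF}(\msf{tr}(\varphi_1)) \cup \msf{SubF}(\msf{tr}(\varphi_2))$, which the induction hypothesis places inside $\msf{tr}[\msf{SubF}(\varphi_1)] \cup \msf{tr}[\msf{SubF}(\varphi_2)] \subseteq \msf{tr}[\msf{SubF}(\varphi)]$. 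The size bound is then a formal consequence, as $\msf{Sz}(\msf{tr}(\varphi)) = |\msf{SubF}(\msf{tr}(\varphi))| \leq |\msf{tr}[\msf{SubF}(\varphi)]| \leq |\msf{SubF}(\varphi)| = \msf{Sz}(\varphi)$, the image of a set under a map never being larger than the set.

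The only genuinely delicate points are the binary-temporal cases of item~1. For $\lU$ and $\lR$, discarding the left argument is sound semantically, but one must note that the head of $\msf{tr}(\varphi_1 \lU \varphi_2)$ is the head of the proper subformula $\varphi_2$, which still sits in $\msf{tr}[\msf{SubF}(\varphi_1 \lU \varphi_2)]$. For $\lW$ and $\lM$, the translation introduces a fresh $\vee$ or $\wedge$, and one must check that this new connective does not break the containment; it does not, precisely because that connective \emph{is} $\msf{tr}(\varphi_1 \lW \varphi_2)$ (resp. $\msf{tr}(\varphi_1 \lM \varphi_2)$), hence is the image of the whole subformula. Beyond this bookkeeping the argument is routine, the entire conceptual content being the shift-invariance observation of the second paragraph.
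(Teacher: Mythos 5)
Your proposal is correct and follows essentially the same route as the paper: the same case-by-case definition of $\msf{tr}$ (identity on propositions, transparent through logical connectives, dropping $\lX,\lF,\lG$, keeping only the right argument of $\lU,\lR$, and turning $\lW,\lM$ into $\vee,\wedge$), justified by the same shift-invariance collapse of temporal operators on size-1 words, which the paper isolates as a separate preliminary lemma (Lemma~\ref{lem:equiv_temp}) rather than proving inline. The bookkeeping for item~1, including the observation that $\msf{tr}$ of the whole formula accounts for the new top-level connective in the $\lW,\lM$ cases, matches the paper's argument.
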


	To make the proof of this lemma not too cumbersome, let first state and prove another that guides us on how to translate temporal operators into non-temporal operators, while preserving an equivalence on size-1 ultimately periodic words.
	\begin{lemma}
		\label{lem:equiv_temp}
		Consider two arbitrary $\LTL$-formulae $\varphi,\varphi_1,\varphi_2 \in \LTL$. We have:
		\begin{enumerate}
			\item for all $\bullet \in \{ \lX,\lF,\lG \}$, we have $\varphi \Leftrightarrow_1 \bullet \varphi$;
			\item for all $\bullet \in \{ \lU,\lR \}$, we have $\varphi_2 \Leftrightarrow_1 \varphi_1 \lU \varphi_2$;
			\item we have $\varphi_1 \lor \varphi_2 \Leftrightarrow_1  \varphi_1 \lW \varphi_2$;
			\item we have $\varphi_1 \wedge \varphi_2 \Leftrightarrow_1  \varphi_1 \lM \varphi_2$;
		\end{enumerate}
	\end{lemma}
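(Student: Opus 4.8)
The plan is to exploit the single defining feature of size-1 ultimately periodic words: if $w = \alpha^\omega$ for some $\alpha \subseteq \prop$, then every suffix coincides with the word itself, i.e. $w[i:] = \alpha^\omega = w$ for all $i \in \N$. This collapses the temporal dimension entirely, so each temporal operator degenerates into a purely propositional connective. I would state this observation first, as it is the engine behind all four items, and then note that $\Leftrightarrow_1$ is preserved under the boolean connectives $\neg,\lor,\wedge$, since these act pointwise on the truth value of their arguments at each $\alpha^\omega$.

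For item 1, I would simply unfold the semantics of $\lX$, $\lF$, and $\lG$ on $w = \alpha^\omega$. Since $w[1:] = w$, we get $w \models \lX\varphi$ iff $w \models \varphi$; since $w[i:] = w$ for every $i$, both the existential quantifier in the clause for $\lF$ and the universal quantifier in the clause for $\lG$ reduce to the single condition $w \models \varphi$. This yields $\varphi \Leftrightarrow_1 \bullet\varphi$ for each $\bullet \in \{\lX, \lF, \lG\}$.

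For item 2, I would first treat $\lU$ directly. Unfolding, $w \models \varphi_1 \lU \varphi_2$ asks for some $i$ with $w[i:] \models \varphi_2$ and $w[j:] \models \varphi_1$ for all $j \le i-1$; because all suffixes equal $w$, the witness $i = 0$ (whose constraint on $\varphi_1$ is vacuous) shows this is equivalent to $w \models \varphi_2$, giving $\varphi_2 \Leftrightarrow_1 \varphi_1 \lU \varphi_2$. For $\lR$, rather than reprove from scratch, I would use its definition $\varphi_1 \lR \varphi_2 := \neg(\neg\varphi_1 \lU \neg\varphi_2)$: applying the $\lU$ equivalence to $\neg\varphi_1,\neg\varphi_2$ and cancelling the double negation yields $\varphi_2 \Leftrightarrow_1 \varphi_1 \lR \varphi_2$ as well.

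Items 3 and 4 then follow mechanically from the semantic definitions $\varphi_1 \lW \varphi_2 := (\varphi_1 \lU \varphi_2) \lor \lG\varphi_1$ and $\varphi_1 \lM \varphi_2 := (\varphi_1 \lR \varphi_2) \wedge \lF\varphi_1$, substituting the equivalences already established (on $w=\alpha^\omega$, both $\varphi_1 \lU \varphi_2$ and $\varphi_1 \lR \varphi_2$ behave like $\varphi_2$, while $\lG\varphi_1$ and $\lF\varphi_1$ behave like $\varphi_1$), and using that $\Leftrightarrow_1$ passes through $\lor$ and $\wedge$. There is essentially no deep obstacle here: the only points requiring a moment's care are the vacuous $j \le i-1$ condition justifying the choice $i=0$ in the $\lU$ case, and checking that the boolean rewriting of $\lW,\lR,\lM$ is compatible with $\Leftrightarrow_1$ — which is immediate from the pointwise nature of the propositional semantics.
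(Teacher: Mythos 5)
Your proof is correct and follows essentially the same route as the paper's: both rest on the observation that $\alpha^\omega[i:] = \alpha^\omega$ for all $i$, treat $\lX,\lF,\lG$ and $\lU$ by direct unfolding (with $i=0$ as the witness for $\lU$), derive $\lR$ from the $\lU$ case via its negation-based definition, and then dispatch $\lW$ and $\lM$ by substituting into their defining equivalences.
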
 
	\begin{proof}
		Consider any $\alpha \subseteq \prop$. Note that for all $i \in \N$, we have $\alpha^\omega[i:] = \alpha^\omega$. Therefore:
		\begin{enumerate}
			\item \begin{itemize}
				\item $\alpha^\omega \models \varphi \Leftrightarrow \alpha^\omega[1:] \models \varphi \Leftrightarrow \alpha^\omega \models \lX \varphi$;
				\item $\alpha^\omega \models \varphi \Leftrightarrow \exists i\in \N,\; \alpha^\omega[i:] \models \varphi \Leftrightarrow \alpha^\omega \models \lF \varphi$;
				\item $\alpha^\omega \models \varphi \Leftrightarrow \forall i\in \N,\; \alpha^\omega[i:] \models \varphi \Leftrightarrow \alpha^\omega \models \lG \varphi;$ 
			\end{itemize}
			\item If $\alpha^\omega \models \varphi_2$, then $\alpha^\omega \models \varphi_1 \lU \varphi_2$.  Furthermore, if $\alpha^\omega \models \varphi_1 \lU \varphi_2$, then there is some $i \in \N$ such that $\alpha^\omega[i:] \models \varphi_2$. Hence, $\alpha^\omega \models \varphi_2$. In fact,  $\alpha^\omega \models \varphi_1 \lU \varphi_2 \Leftrightarrow \alpha^\omega \models \varphi_2$.
			
			In addition, we have:
			\begin{equation*}
				\alpha^\omega \models \varphi_1 \lR \varphi_2 \Leftrightarrow \alpha^\omega \not\models \neg\varphi_1 \lU \neg\varphi_2 \Leftrightarrow \alpha^\omega \not\models \neg\varphi_2 \Leftrightarrow \alpha^\omega \models \varphi_2
			\end{equation*}
			\item $\alpha^\omega \models \varphi_1 \lW \varphi_2 \Leftrightarrow \alpha^\omega \models (\varphi_1 \lU \varphi_2) \lor \lG \varphi_1 \Leftrightarrow \varphi_2 \lor \varphi_1$;
			\item $\alpha^\omega \models \varphi_1 \lM \varphi_2 \Leftrightarrow \alpha^\omega \models (\varphi_1 \lR \varphi_2) \wedge \lF \varphi_1 \Leftrightarrow \varphi_2 \wedge \varphi_2$.
		\end{enumerate}
	\end{proof}

	The proof of Lemma~\ref{lem:no_temp} is now straightforward, although somewhat long since there are several cases to tackle.
	\begin{proof}
		We define $\msf{tr}: \LTL \rightarrow \LTL$ by induction on $\LTL$-formulae. Consider a formula $\varphi \in \LTL$:
		\begin{itemize}
			\item Assume that $\varphi = x$ for any $x \in \prop$. We set $\msf{tr}(\varphi) := \varphi$ which satisfies conditions 1.-3.; . 
			\item for all $\bullet \in \msf{Op}_{Un}$, assume that $\varphi = \bullet \varphi'$ and that $\msf{tr}$ is defined on all sub-formulae of $\varphi'$ and satisfies conditions 1.-3. on $\varphi'$. 
			\begin{itemize}
				\item if $\bullet = \lnot$, we set $\msf{tr}(\varphi) := \lnot \msf{tr}(\varphi')$, which ensures that $\msf{SubF}(\msf{tr}(\varphi)) = \{ \lnot \msf{tr}(\varphi') \} \cup \msf{SubF}(\msf{tr}(\varphi')) \subseteq \msf{tr}[\{ \lnot \varphi' \}] \cup \msf{tr}[\msf{SubF}(\varphi')] = \msf{tr}[\msf{SubF}(\varphi)]$. Furthermore, for all $\alpha \subseteq \prop$, we have $\alpha^\omega \models \msf{tr}(\varphi) \Leftrightarrow \alpha^\omega \not\models \msf{tr}(\varphi') \Leftrightarrow \alpha^\omega \not\models \varphi' \Leftrightarrow \alpha^\omega \models \varphi$. Thus, $\varphi \Leftrightarrow_1\msf{tr}(\varphi)$;
				\item if $\bullet \in \{ \lX,\lF,\lG \}$, we set $\msf{tr}(\varphi) := \msf{tr}(\varphi')$, which ensures that $\msf{SubF}(\msf{tr}(\varphi)) = \msf{SubF}(\msf{tr}(\varphi')) \subseteq \msf{tr}[\msf{SubF}(\varphi')] \subseteq \msf{tr}[\msf{SubF}(\varphi)]$. Furthermore, $\varphi \Leftrightarrow_1 \msf{tr}(\varphi)$ by Lemma~\ref{lem:equiv_temp} and since $\varphi' \Leftrightarrow_1 \msf{tr}(\varphi')$.
			\end{itemize}
			\item for any $\bullet \in \msf{Op}_{Bin}$, assume that $\varphi = \varphi_1 \bullet \varphi_2$ and that $\msf{tr}$ is defined on all sub-formulae of $\varphi_1$ and $\varphi_2$ and satisfies conditions 1.-3. on $\varphi_1$ and $\varphi_2$. 
			\begin{itemize}
				\item if $\bullet \in \{ \lor,\wedge,\Rightarrow,\Leftrightarrow \}$, then we set $\msf{tr}(\varphi) := \msf{tr}(\varphi_1) \bullet\msf{tr}(\varphi_2)$, which ensures that $\msf{SubF}(\msf{tr}(\varphi)) = \{ \msf{tr}(\varphi) \} \cup \msf{SubF}(\msf{tr}(\varphi_1)) \cup \msf{SubF}(\msf{tr}(\varphi_2)) \subseteq \msf{tr}[\{ \varphi \}] \cup \msf{tr}[\msf{SubF}(\varphi_1)] \cup \msf{tr}[\msf{SubF}(\varphi_2)] \subseteq \msf{tr}[\msf{SubF}(\varphi)]$. Furthermore, since $\varphi_1 \Leftrightarrow_1 \msf{tr}(\varphi_1)$ and $\varphi_2 \Leftrightarrow_1 \msf{tr}(\varphi_2)$, we have $\varphi \Leftrightarrow_1 \msf{tr}(\varphi)$;
				\item if $\bullet \in \{ \lU,\lR \}$, then we set $\msf{tr}(\varphi) := \msf{tr}(\varphi_2)$, which ensures that $\msf{SubF}(\msf{tr}(\varphi)) = \msf{SubF}(\msf{tr}(\varphi_2)) \subseteq \msf{tr}[\msf{SubF}(\varphi_2)] \subseteq \msf{tr}[\msf{SubF}(\varphi)]$. Furthermore, $\varphi \Leftrightarrow_1 \msf{tr}(\varphi)$ by Lemma~\ref{lem:equiv_temp} and since $\varphi_1 \Leftrightarrow_1 \msf{tr}(\varphi_1)$ and $\varphi_2 \Leftrightarrow_1 \msf{tr}(\varphi_2)$;
				\item if $\bullet \in \{ \lW \}$, then we set $\msf{tr}(\varphi) := \msf{tr}(\varphi_1) \lor \msf{tr}(\varphi_2)$, which ensures that $\msf{SubF}(\msf{tr}(\varphi)) = \{ \msf{tr}(\varphi) \} \cup \msf{SubF}(\msf{tr}(\varphi_1)) \cup \msf{SubF}(\msf{tr}(\varphi_2)) \subseteq \msf{tr}[\{ \varphi \}] \cup \msf{tr}[\msf{SubF}(\varphi_1)] \cup \msf{tr}[\msf{SubF}(\varphi_2)] \subseteq \msf{tr}[\msf{SubF}(\varphi)]$, and $\varphi \Leftrightarrow_1 \msf{tr}(\varphi)$ by Lemma~\ref{lem:equiv_temp} and since $\varphi_1 \Leftrightarrow_1 \msf{tr}(\varphi_1)$ and $\varphi_2 \Leftrightarrow_1 \msf{tr}(\varphi_2)$;
				\item if $\bullet \in \{ \lM \}$, then we set $\msf{tr}(\varphi) := \msf{tr}(\varphi_1) \wedge \msf{tr}(\varphi_2)$, which ensures that $\msf{SubF}(\msf{tr}(\varphi)) = \{ \msf{tr}(\varphi) \} \cup \msf{SubF}(\msf{tr}(\varphi_1)) \cup \msf{SubF}(\msf{tr}(\varphi_2)) \subseteq \msf{tr}[\{ \varphi \}] \cup \msf{tr}[\msf{SubF}(\varphi_1)] \cup \msf{tr}[\msf{SubF}(\varphi_2)] \subseteq \msf{tr}[\msf{SubF}(\varphi)]$ and $\varphi \Leftrightarrow_1 \msf{tr}(\varphi)$ by Lemma~\ref{lem:equiv_temp} and since $\varphi_1 \Leftrightarrow_1 \msf{tr}(\varphi_1)$ and $\varphi_2 \Leftrightarrow_1 \msf{tr}(\varphi_2)$;
				%
				%
			\end{itemize}
		\end{itemize}
		This concludes the proof since, by construction, for all $\LTL$-formulae $\varphi$, no temporal operator occur in the formula $\msf{tr}(\varphi)$.
	\end{proof}
	
	\subsubsection{$\LTL$-formulae with at least $k$ different propositions}
	Second, we show that for a formula $\varphi$ to accept all words in $\mathcal{P}$ and reject all words in $\mathcal{N}$, then it must be that that, for all $k \in \Idm$, at least one variable among $\{ x_k,\bar{x_k} \}$ must occur in $\varphi$. Let first us define below a few notations that will be useful in the following.
	\begin{definition}
		\label{def:relevant_def}
		For all $k \in \Idm$, we let $X_k := \{ x_k,\bar{x_k} \} \subseteq \prop$. Furthermore, for all $\I \subseteq \Idm$, we let $X_\I \subseteq \prop$ denote the set of propositions $X_\I := \cup_{k \in \I} X_k$.
		
		Consider an $\LTL$-formula $\varphi$. For all subsets of propositions $Y \subseteq \prop$, we say that $Y$ occurs in $\varphi$ if a proposition in $Y$ is a sub-formula of $\varphi$, i.e. $Y \cap \prop(\varphi) \neq \emptyset$.
		
		Furthermore, for all $S \subseteq (2^{\prop})^\omega$, we say that the formula $\varphi$ accepts (resp. rejects) $S$ if $\varphi$ accepts all words in $S$ (resp. rejects all words in $S$). We say that $\varphi$ distinguishes two sets of words if it accepts one set and rejects the other.
	\end{definition}
	
	We can now state the lemma of interest for us. 
	\begin{lemma}
		\label{lem:distinguish}
		Consider a temporal-free $\LTL$-formula $\varphi$. Consider three subsets of propositions $\alpha_1,\alpha_2,Y \subseteq \prop$, and assume that, for all $x \in \prop \setminus Y$, we have $x \in \alpha_1$ if and only if $x \in \alpha_2$. In that case, if the formula $\varphi$ distinguishes the words $\alpha_1^\omega$ and $\alpha_2^\omega$, then the set $Y$ occurs in $\varphi$.
	\end{lemma}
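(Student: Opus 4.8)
The plan is to prove the contrapositive: assuming that $Y$ does not occur in $\varphi$, i.e. $Y \cap \prop(\varphi) = \emptyset$, I will show that $\varphi$ cannot distinguish $\alpha_1^\omega$ and $\alpha_2^\omega$, that is, $\alpha_1^\omega \models \varphi$ if and only if $\alpha_2^\omega \models \varphi$. Since $\varphi$ is temporal-free, it is built up from propositions using only the operators in $\mathsf{Op}_{Un}^{\msf{lg}} \cup \mathsf{Op}_{Bin}^{\msf{lg}} = \{ \neg, \wedge, \vee, \Rightarrow, \Leftrightarrow \}$, so a structural induction on $\varphi$ has no temporal cases to handle.

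The base case is $\varphi = p$ for a proposition $p \in \prop$. The hypothesis that $Y$ does not occur in $\varphi$ forces $p \notin Y$, i.e. $p \in \prop \setminus Y$; the assumption relating $\alpha_1$ and $\alpha_2$ then gives $p \in \alpha_1 \Leftrightarrow p \in \alpha_2$, and hence $\alpha_1^\omega \models p \Leftrightarrow \alpha_2^\omega \models p$ directly from the semantics of atomic propositions.

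For the inductive step, if $\varphi = \neg \varphi'$ or $\varphi = \varphi_1 \bullet \varphi_2$ with $\bullet \in \mathsf{Op}_{Bin}^{\msf{lg}}$, then $\prop(\varphi') \subseteq \prop(\varphi)$ and $\prop(\varphi_i) \subseteq \prop(\varphi)$, so none of the subformulas contains a proposition of $Y$ either. I can therefore apply the induction hypothesis to each subformula and conclude the equivalence for $\varphi$ by unfolding the (purely Boolean) semantics of the connective; for instance, in the case $\bullet = \wedge$, from $\alpha_1^\omega \models \varphi_i \Leftrightarrow \alpha_2^\omega \models \varphi_i$ for $i \in \{1,2\}$ one immediately obtains $\alpha_1^\omega \models \varphi_1 \wedge \varphi_2 \Leftrightarrow \alpha_2^\omega \models \varphi_1 \wedge \varphi_2$, and the remaining connectives are entirely analogous. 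This establishes the claim, and hence the lemma by contraposition.

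I do not expect a genuine obstacle here: the whole argument is a locality-of-evaluation fact, namely that for a size-1 word a temporal-free formula depends only on the truth values of the propositions it mentions. The one point worth stating carefully is why temporal-freeness is used: it guarantees that satisfaction of $\alpha^\omega$ reduces to the propositional evaluation of $\varphi$ at its single letter $\alpha$, so no reasoning about positions $i \geq 1$ is ever required. Without this, a temporal operator could in principle expose a difference between $\alpha_1^\omega$ and $\alpha_2^\omega$ that is not controlled by merely comparing $\alpha_1$ and $\alpha_2$ on $\prop \setminus Y$, and the induction would break.
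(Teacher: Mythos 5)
Your proof is correct and is essentially the same argument as the paper's: a structural induction over temporal-free formulae in which the binary-connective step rests on the observation that if neither subformula distinguishes $\alpha_1^\omega$ from $\alpha_2^\omega$ then neither does the compound formula. You simply state the induction in contrapositive form throughout, whereas the paper states the property in the direct form but reasons contrapositively inside the inductive step; the content is identical.
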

	\begin{proof}
		Let us prove this property $\mathcal{P}(\varphi)$ on $\LTL$-formulae $\varphi$ by induction. Consider such an $\LTL$-formula $\varphi$:
		\begin{itemize}
			\item Assume that $\varphi = x$ for some $x \in \prop$. Then, if $\varphi$ distinguishes $\alpha_1^\omega$ and $\alpha_2^\omega$, it must be that $x \in \prop$, by assumption.
			\item Assume that $\varphi = \neg \varphi'$ and $\mathcal{P}(\varphi')$ holds. Then, if $\varphi$ distinguishes $\alpha_1^\omega$ and $\alpha_2^\omega$, so does $\varphi'$. Hence, $\mathcal{P}(\varphi)$ follows.
			\item Finally, assume that $\varphi = \varphi_1 \bullet \varphi_2$ for some $\bullet \in \{ \lor,\wedge,\Rightarrow,\Leftrightarrow \}$ and that $\mathcal{P}(\varphi_1)$ and $\mathcal{P}(\varphi_2)$ hold. Then, if neither $\varphi_1$ nor $\varphi_2$ distinguish $\alpha_1^\omega$ and $\alpha_2^\omega$, neither does $\varphi$. Hence, assuming that $\varphi$ distinguish $\alpha_1^\omega$ and $\alpha_2^\omega$, then $\prop \setminus Y$ occurs in $\varphi_1$ or $\varphi_2$, and it therefore also occurs in $\varphi$. Hence, $\mathcal{P}(\varphi)$ holds.
		\end{itemize}
	\end{proof}

	When applied to the reduction that we have defined in Definition~\ref{def:reduction}, we obtain the corollary below.
	\begin{corollary}
		\label{coro:at_least_k}
		Assume that a temporal-free $\LTL$-formula distinguishes $\mathcal{P}$ and $\mathcal{N}$. Then, for all $k \in \Idm$, a variable in $X_k$ must occur in $\varphi$.
	\end{corollary}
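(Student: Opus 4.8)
The plan is to obtain this corollary as an immediate specialization of Lemma~\ref{lem:distinguish}, applied once for each $k \in \Idm$. The key observation is that the sets $\mathcal{P}$ and $\mathcal{N}$ from Definition~\ref{def:reduction} contain, for every $k$, a pair of words whose labels differ only on propositions in $X_k$. Indeed, $\rho_k = (X_k)^\omega \in \mathcal{P}$ and $\eta = \emptyset^\omega \in \mathcal{N}$, so any temporal-free formula $\varphi$ distinguishing $\mathcal{P}$ and $\mathcal{N}$ accepts $\rho_k$ and rejects $\eta$; that is, $\varphi$ distinguishes the size-1 words $(X_k)^\omega$ and $\emptyset^\omega$.

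First I would fix $k \in \Idm$ and instantiate Lemma~\ref{lem:distinguish} with $\alpha_1 := X_k$, $\alpha_2 := \emptyset$ and $Y := X_k$. I then check the hypothesis of the lemma: for every $x \in \prop \setminus X_k$ we have $x \notin X_k = \alpha_1$ and $x \notin \emptyset = \alpha_2$, so the equivalence ``$x \in \alpha_1$ iff $x \in \alpha_2$'' holds vacuously. Since $\varphi$ is temporal-free and distinguishes $\alpha_1^\omega = \rho_k$ and $\alpha_2^\omega = \eta$, Lemma~\ref{lem:distinguish} yields that $Y = X_k$ occurs in $\varphi$, i.e. a variable in $X_k$ is a sub-formula of $\varphi$. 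As $k$ was arbitrary, the conclusion holds for all $k \in \Idm$.

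There is essentially no obstacle here beyond bookkeeping. The only points requiring care are the correct instantiation of the three subsets of propositions and the observation that $\rho_k$ and $\eta$ are labeled precisely by $X_k$ and $\emptyset$, which is exactly what makes the hypothesis of Lemma~\ref{lem:distinguish} hold. The genuine content of the argument lives entirely in Lemma~\ref{lem:distinguish}, and this corollary is merely its application to the specific witnesses $\rho_k$ and $\eta$ supplied by the reduction.
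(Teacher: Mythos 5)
Your proof is correct and follows essentially the same route as the paper: for each $k$, instantiate Lemma~\ref{lem:distinguish} with $\alpha_1 = X_k$, $\alpha_2 = \emptyset$, $Y = X_k$, using the witnesses $\rho_k \in \mathcal{P}$ and $\eta \in \mathcal{N}$. Your instantiation is in fact spelled out slightly more carefully than the paper's own one-line argument.
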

	\begin{proof}
		Let $k \in \Idm$. We have $\rho_k = X_k^\omega$ and $\eta = \emptyset^\omega$. Therefore, by Lemma~\ref{lem:distinguish}, if an $\LTL$-formula $\varphi$ distinguishes $\rho_k$ and $\eta$, it must be that $X_k$ occurs in $\varphi$. Since this holds for all $k \in \Idm$ and for all $k,k' \in \Idm$, we have $X_k \cap X_{k'} \neq \emptyset \Rightarrow k = k'$, the corollary follows.
	\end{proof}
	
	\subsubsection{Concise formulae}
	With the lemmas we have proven so far, we know that we can restrict ourselves to $\LTL$-formulae:
	\begin{itemize}
		\item that are temporal-free;
		\item where, for all $k \in \Idm$, a variable in $X_k$ occurs.
	\end{itemize}
	In addition, recall that the formulae that we consider are of size at most $B = 2m-1$. Our next goal is to establish that, for all $k \in \Idm$, when at least $k$ different propositions occur in a formula of size at most $2k-1$, then that formula must  be \emph{concise}. Informally, an $\LTL$-formula is concise if it never uses any unary operators and never uses several times a sub-formula. We define formally (by induction) this notion below. 
	
	\begin{definition}
		\label{def:concise_formulae}
		For all $\LTL$-formulae, we let $\prop(\varphi) = \msf{SubF}(\varphi) \cap \prop$ denote the of propositions occurring in $\varphi$. Then, we define by induction the \emph{concise} $\LTL$-formulae:
		\begin{itemize}
			\item for all $x \in \prop$, the formula $\varphi := x$ is \emph{concise};
			\item given any two concise formulae $\varphi_1$ and $\varphi_2$ such that $\prop(\varphi_1) \cap \prop(\varphi_2) = \emptyset$, then the formula $\varphi := \varphi_1 \bullet \varphi_2$ is concise for any $\bullet \in \msf{Op}_{Bin}$.
		\end{itemize}
		We denote by $\LTL_\msf{cs}$ the set of concise $\LTL$-formulae. 
		
		For all $Y \subseteq \prop$, we say that an $\LTL$-formula $\varphi$  \emph{represents concisely} $Y$ if $Y \subseteq \prop(\varphi)$ and $\msf{Sz}(\varphi) \leq 2|Y| - 1$. 
	\end{definition}

	Our goal is now to show the lemma below. 
	\begin{lemma}
		\label{lem:concise}
		For all $Y \subseteq \prop$, if an $\LTL$-formula $\varphi$ represents concisely $Y$ then it is concise and $Y = \prop(\varphi)$.
	\end{lemma}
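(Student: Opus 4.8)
The plan is to reduce Lemma~\ref{lem:concise} to a single size inequality together with its equality case. Writing $h(\varphi) := \msf{Sz}(\varphi) - 2|\prop(\varphi)|$, I claim that every $\LTL$-formula satisfies (i) $h(\varphi) \geq -1$, and (ii) if $h(\varphi) = -1$ then $\varphi$ is concise. Granting this, the lemma follows by a sandwiching argument: if $\varphi$ represents concisely $Y$, then from $Y \subseteq \prop(\varphi)$ we get $2|Y| - 1 \leq 2|\prop(\varphi)| - 1$, and combined with the hypothesis $\msf{Sz}(\varphi) \leq 2|Y| - 1$ and the lower bound $\msf{Sz}(\varphi) \geq 2|\prop(\varphi)| - 1$ of (i), all these quantities coincide. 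Hence $|Y| = |\prop(\varphi)|$, which with $Y \subseteq \prop(\varphi)$ gives $Y = \prop(\varphi)$, and $h(\varphi) = -1$, so $\varphi$ is concise by (ii).

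The heart of the matter is a monotonicity property of $h$, in which the genuine difficulty — counting \emph{shared} sub-formulae under a binary operator — is confined. For a set $D \subseteq \msf{SubF}(\psi)$ that is closed under taking sub-formulae, write $Z_D := D \cap \prop$ and extend $h$ by $h(D) := |D| - 2|Z_D|$. I will show that for every nonempty sub-formula-closed $D \subseteq \msf{SubF}(\psi)$ one has $h(D) \leq h(\psi)$. Let $U := \msf{SubF}(\psi) \setminus D$; since $D$ is nonempty and sub-formula-closed it contains a proposition, so $U$ omits a proposition, and if $D \neq \msf{SubF}(\psi)$ then $U$ contains the root $\psi$. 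A direct computation gives $h(\psi) - h(D) = |U| - 2\,|U \cap \prop|$, so it suffices to prove that $U$ contains at least as many non-atomic sub-formulae ($a$ of them, say) as propositions ($p'$ of them).

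Consider the graph on vertex set $U$ with an edge from each $\chi \in U$ to every immediate sub-formula of $\chi$ lying in $U$. Because $U$ is upward closed (a super-formula of something in $U$ is again in $U$), every vertex has a path up to the root $\psi$ inside $U$, so the graph is connected and has at least $|U| - 1 = a + p' - 1$ edges; on the other hand only non-atomic vertices emit edges, and each emits at most two, so there are at most $2a$ edges. This yields $a \geq p' - 1$. The point that saves the off-by-one is that $a = p'-1$ would require the bound $\#\mathrm{edges} \leq 2a$ to be tight, forcing every non-atomic vertex of $U$ to have both of its (distinct) immediate sub-formulae in $U$; then $U$ would also be downward closed, hence equal to $\msf{SubF}(\psi)$, i.e.\ $D = \emptyset$ — excluded. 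Therefore $a \geq p'$ and $h(D) \leq h(\psi)$. (The degenerate case $D = \msf{SubF}(\psi)$ gives $h(D) = h(\psi)$ directly.)

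Both parts (i) and (ii) now follow. For (i), apply monotonicity with $D = \{x\}$ for any $x \in \prop(\varphi)$: then $h(\varphi) \geq h(\{x\}) = -1$. For (ii) I argue by induction on $\varphi$. The atomic case is immediate, and a unary top operator gives $h(\varphi) = h(\varphi') + 1 \geq 0 > -1$, so it cannot achieve equality. For a binary $\varphi = \varphi_1 \bullet \varphi_2$, inclusion–exclusion on $\msf{SubF}(\varphi) = \{\varphi\} \cup \msf{SubF}(\varphi_1) \cup \msf{SubF}(\varphi_2)$ and on $\prop(\varphi)$ gives, with $C := \msf{SubF}(\varphi_1) \cap \msf{SubF}(\varphi_2)$ and $Z := \prop(\varphi_1) \cap \prop(\varphi_2) = Z_C$,
$$h(\varphi) = h(\varphi_1) + h(\varphi_2) + 1 + \bigl(2|Z| - |C|\bigr).$$
If $C \neq \emptyset$, monotonicity applied to $D = C \subseteq \msf{SubF}(\varphi_1)$ yields $|C| - 2|Z| \leq h(\varphi_1)$, whence $h(\varphi) \geq h(\varphi_2) + 1 \geq 0 > -1$; so $h(\varphi) = -1$ forces $C = \emptyset$, hence $Z = \emptyset$ and $h(\varphi) = h(\varphi_1) + h(\varphi_2) + 1$. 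As each $h(\varphi_i) \geq -1$, equality forces $h(\varphi_1) = h(\varphi_2) = -1$, so by the induction hypothesis $\varphi_1, \varphi_2$ are concise; together with $\prop(\varphi_1) \cap \prop(\varphi_2) = Z = \emptyset$ this makes $\varphi$ concise by Definition~\ref{def:concise_formulae}. The main obstacle is exactly this monotonicity property and its connectivity/edge-counting proof; once it is available, everything else is bookkeeping.
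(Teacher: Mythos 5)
Your proof is correct, but it takes a genuinely different route from the paper's. The paper's engine is Lemma~\ref{lem:at_least_that_many_subformulae}, a structural induction showing that if $Y \subseteq \prop(\varphi)$ then at least $2|Y|-1$ distinct sub-formulae of $\varphi$ contain a variable of $Y$ (and at least $2|Y|$ if the containment $Y \subsetneq \prop(\varphi)$ is strict); the proof of Lemma~\ref{lem:concise} then re-runs an induction on $\varphi$, invoking these counts to rule out unary operators and shared propositions. You instead isolate the potential $h(\varphi)=\msf{Sz}(\varphi)-2|\prop(\varphi)|$ and prove a monotonicity statement for $h$ over arbitrary nonempty sub-formula-closed subsets $D$ of $\msf{SubF}(\psi)$, via connectivity and edge-counting in the sub-formula DAG restricted to the complement $U$; the off-by-one is recovered exactly as you say, because tightness would force $U$ to be downward closed, hence equal to $\msf{SubF}(\psi)$, contradicting $D \neq \emptyset$. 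The bound $h \geq -1$ and its equality case then give the lemma by sandwiching. I checked the delicate points: $U$ is indeed upward closed, so every vertex of $U$ reaches the root inside $U$; the bound of two emitted edges per non-atomic vertex is valid even when the two children of a binary node coincide; and the identity $h(\varphi)=h(\varphi_1)+h(\varphi_2)+1+2|Z|-|C|$ with $Z=Z_C$ is right. What your approach buys: the monotonicity lemma handles shared \emph{non-atomic} sub-formulae directly (the paper only ever reasons about shared propositions, relying on the $\msf{Occ}^+$ strengthening to absorb any sharing), and the conclusion $Y=\prop(\varphi)$ falls out of the sandwich rather than requiring a second appeal to the counting lemma, as the paper's last paragraph does. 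What the paper's approach buys: it stays within a single elementary structural induction and never needs the graph-theoretic connectivity argument.
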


	In order to prove this lemma, we are first going to prove the lemma below, relating in a given $\LTL$-formula, the number of sub-formulae and the number of different propositions that occur.
	\begin{lemma}
		\label{lem:at_least_that_many_subformulae}
		For all $Y \subseteq \prop$, we let $\msf{Occ}(Y),\msf{Occ}^+(Y)
		,\mathsf{NbSubF}(Y),\mathsf{NbSubF}^+(Y)		$ denote the properties on $\LTL$-formulae such that an $\LTL$-formula $\varphi$ satisfies:
		\begin{itemize}
			\item $\msf{Occ}(Y)$ if all variables in $Y$ occur in $\varphi$, i.e. $Y \subseteq \prop(\varphi)$;
			\item $\msf{Occ}^+(Y)$ if all variables in $Y$ occur in $\varphi$, i.e. $Y \subsetneq \prop(\varphi)$;
			\item $\mathsf{NbSubF}(Y)$ if there are at least $2|Y|-1$ different sub-formulae of $\varphi$ where $Y$ occurs;
			\item $\mathsf{NbSubF}^+(Y)$ if there are at least $2|Y|$ different sub-formulae of $\varphi$ where $Y$ occurs.
		\end{itemize}

		Then, consider any $\LTL$-formula $\varphi$ on $\prop$. If $\varphi$ satisfies $\msf{Occ}(Y)$ (resp. $\msf{Occ}^+(Y)$), then it also satisfies $\mathsf{NbSubF}(Y)$ (resp. $\mathsf{NbSubF}^+(Y)$).
	\end{lemma}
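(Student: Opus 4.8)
The plan is to prove the two implications ($\msf{Occ}(Y)\Rightarrow\mathsf{NbSubF}(Y)$ and $\msf{Occ}^+(Y)\Rightarrow\mathsf{NbSubF}^+(Y)$) simultaneously, by structural induction on $\varphi$ and uniformly over all $Y\subseteq\prop$, reading ``sub-formula where $Y$ occurs'' as a \emph{distinct} element $\psi\in\msf{SubF}(\varphi)$ with $\prop(\psi)\cap Y\neq\emptyset$ and writing $N_Y(\varphi)$ for their number. The base case $\varphi=x$ is immediate: $\msf{Occ}(Y)$ forces $Y\in\{\emptyset,\{x\}\}$, where $N_Y(\varphi)=|Y|\geq 2|Y|-1$; and $\msf{Occ}^+(Y)$ forces $Y=\emptyset$, where $N_Y(\varphi)=0=2|Y|$. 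For a unary node $\varphi=\bullet\varphi'$ one has $\prop(\varphi)=\prop(\varphi')$, so the hypotheses for $\varphi$ and $\varphi'$ coincide, every relevant sub-formula of $\varphi'$ is one of $\varphi$, and $\varphi\notin\msf{SubF}(\varphi')$; the induction hypothesis then gives $N_Y(\varphi)\geq N_Y(\varphi')$ and hence both bounds at once (the extra node $\varphi$ can only help).

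The binary node $\varphi=\varphi_1\bullet\varphi_2$ is where the work lies. I would \emph{partition} the occurring propositions of $Y$ across the two children \emph{asymmetrically}, setting
\[ Y_1 := Y\cap\prop(\varphi_1), \qquad Y_2 := (Y\cap\prop(\varphi_2))\setminus\prop(\varphi_1), \]
so that $Y_1\subseteq\prop(\varphi_1)$, $Y_2\subseteq\prop(\varphi_2)$, $Y_1\cap Y_2=\emptyset$, and, under $\msf{Occ}(Y)$, $|Y_1|+|Y_2|=|Y|$. Applying the induction hypothesis to $(\varphi_1,Y_1)$ and to $(\varphi_2,Y_2)$ produces at least $2|Y_1|-1$ sub-formulae of $\varphi_1$ where $Y_1$ occurs and at least $2|Y_2|-1$ sub-formulae of $\varphi_2$ where $Y_2$ occurs; all of these are relevant sub-formulae of $\varphi$, and adding the node $\varphi$ itself — which, being larger than all its proper sub-formulae, is distinct from both families and is relevant as soon as $Y\neq\emptyset$ — would yield the desired total $2|Y|-1$.

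The main obstacle is that $N_Y(\varphi)$ counts \emph{distinct} sub-formulae while $\varphi_1$ and $\varphi_2$ may \emph{share} many, so the two families just produced must be shown disjoint before their sizes may be added. This is exactly where the asymmetry of the partition pays off: a common sub-formula $\psi\in\msf{SubF}(\varphi_1)\cap\msf{SubF}(\varphi_2)$ satisfies $\prop(\psi)\subseteq\prop(\varphi_1)$ and therefore cannot meet $Y_2$, whose elements were deliberately excluded from $\prop(\varphi_1)$; hence no $\psi$ is counted on both sides. (A symmetric split $Y_2=Y\cap\prop(\varphi_2)$ would fail, since the induction hypothesis only \emph{lower}-bounds the shared part and gives no way to discard the overcount.) For the strengthened conclusion under $\msf{Occ}^+(Y)$, a proposition $x^\ast\in\prop(\varphi)\setminus Y$ lies in $\prop(\varphi_1)$ or in $\prop(\varphi_2)$, and in either case makes the corresponding child's hypothesis strict (i.e.\ $\msf{Occ}^+$ holds there); the induction hypothesis then contributes one extra sub-formula on that side, promoting the total to $2|Y|$. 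The degenerate cases where one child carries all of $Y$ (some $Y_i=\emptyset$) only make the bounds larger and are dispatched directly.
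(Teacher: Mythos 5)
Your proof is correct and follows essentially the same route as the paper's: a structural induction in which, at a binary node, the occurring part of $Y$ is split asymmetrically into $Y\cap\prop(\varphi_1)$ and $(Y\cap\prop(\varphi_2))\setminus\prop(\varphi_1)$, the two counted families are disjoint because any shared sub-formula has propositions inside $\prop(\varphi_1)$, and the node itself supplies the $+1$. The paper's proof uses exactly this partition (its $Y_1$ and $Y_2'$) and the same promotion to the strict bound via an extra proposition in one child.
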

	\begin{proof}
		Let us show this lemma by induction on $\LTL$-formulae $\varphi$:
		\begin{itemize}
			\item Assume that $\varphi = x$. Then, $\varphi$ satisfies $\msf{Occ}(Y)$ only for $Y = \emptyset$ and $Y = \{ x \}$, and it also satisfies $\mathsf{NbSubF}(\emptyset)$ and $\mathsf{NbSubF}(\{x\})$. Furthermore, $\varphi$ satisfies $\msf{Occ}^+(Y)$ only for $Y = \emptyset$, and it also satisfies $\mathsf{NbSubF}^+(\emptyset)$;
			\item for all $\bullet \in \msf{Op}_{Un}$, assume that $\varphi = \bullet \varphi'$. Consider any $Y \subseteq \prop$. If $\varphi$ satisfies $\msf{Occ}(Y)$ (resp. $\msf{Occ}^+(Y)$), then so does $\varphi'$. Hence, $\varphi'$ satisfies $\mathsf{NbSubF}(Y)$ (resp. $\msf{NbSubF}^+(Y)$), and therefore so does $\varphi$;
			\item for all $\bullet \in \msf{Op}_{Bin}$, assume that $\varphi = \varphi_1 \bullet \varphi_2$. Consider any $Y \subseteq \prop$ and assume that $\varphi$ satisfies $\msf{Occ}(Y)$. Let $Y_1 := \prop(\varphi_1) \cap Y \subseteq Y$ denote the set of different propositions in $Y$ occurring in $\varphi_1$ (they may also occur in $\varphi_2$). Let also $Y_2' := (\prop(\varphi_2) \setminus \prop(\varphi_1)) \cap Y \subseteq Y$ denote the set of different propositions in $Y$  occurring in $\varphi_2$ and not in $\varphi_1$. We have $Y = Y_1 \cup Y_2'$. Furthermore:
			\begin{itemize}
				\item by our induction hypothesis on $\varphi_1$, there are at least $k_1 := 2|Y_1|-1$ sub-formulae in $\msf{SubF}(\varphi_1) \subseteq \msf{SubF}(\varphi)$ where $Y_1$ occurs;
				\item by our induction hypothesis on $\varphi_2$, there are also at least $k_2' := 2|Y_2'|-1$ sub-formulae in $\msf{SubF}(\varphi_2) \subseteq \msf{SubF}(\varphi)$ where $Y'_2$ occurs. By definition of $Y'_2$, it follows that all these sub-formulae are not sub-formulae of $\varphi_1$;
				\item Finally, the sub-formula $\varphi$ itself is a sub-formula of $\varphi$ that is neither a sub-formula of $\varphi_1$ nor of $\varphi_2$ and where $Y$ occurs.
			\end{itemize}
			Therefore, there are at least $k_1 + k_2' + 1 = 2|Y_1|-1 + 2|Y'_2|-1 + 1 \geq 2|Y|-1$ different sub-formulae of $\varphi$ where $Y$ occurs. That is, $\varphi$ satisfies $\msf{NbSubF}(Y)$. Furthermore, if $\varphi$ satisfies also $\msf{Occ}^+(Y)$, then $\prop \setminus Y$ occurs in $\varphi_1$ or $\varphi_2$. Therefore, the first item holds with $k_1 = 2|Y_1|$ or the second item holds with $k_2' = 2|Y_2'|$. In both cases, we obtain that they are at least $k_1 + k_2' + 1 \geq 2|Y|$ different sub-formulae of $\varphi$ where $Y$ occurs. That is, $\varphi$ satisfies $\msf{NbSubF}^+(Y)$.
		\end{itemize}
	\end{proof}
	We can now proceed to the proof of Lemma~\ref{lem:concise}.
	\begin{proof}
		We show by induction on $\LTL$-formulae $\varphi$ the following property $\mathcal{P}(\varphi)$: for all $Y \subseteq \prop$, if $\varphi$ represents concisely $Y$, then it is concise. Consider an $\LTL$-formula $\varphi$:
		\begin{itemize}
			\item if $\varphi = x$ then $\mathcal{P}(\varphi)$ straightforwardly holds;
			\item for all $\bullet \in \msf{Op}_{Un}$, assume that $\varphi = \bullet \varphi'$. Consider any $Y \subseteq \prop$. If $\varphi$ represents concisely $Y$, then $Y \subseteq \prop(\varphi')$. By Lemma~\ref{lem:at_least_that_many_subformulae}, it must be that $\msf{Sz}(\varphi') \geq 2|Y|-1$, hence $\msf{Sz}(\varphi) \geq 2|Y|$, which is a contradiction since $\varphi$ represents concisely $Y$, and hence $\msf{Sz}(\varphi) \leq 2|Y|-1$. In fact, in that case, $\varphi$ does not represent concisely $Y$ for any $Y \subseteq \prop$;
			\item for all $\bullet \in \msf{Op}_{Bin}$, assume that $\varphi = \varphi_1 \bullet \varphi_2$. Consider any $Y \subseteq \prop$ and assume that $\varphi$ represents concisely $Y$. Let $k := |Y|$. Let $Y_{1} := \prop(\varphi_1) \setminus \prop(\varphi_2)$, $Y_{2} := \prop(\varphi_2) \setminus \prop(\varphi_1)$ and $Y_{1,2} := \prop(\varphi_1) \cap \prop(\varphi_2)$. Let $k_1 := |Y_1|$, $k_2 := |Y_2|$ and $k_{1,2} := |Y_{1,2}|$. We have $k_1 + k_2 + k_{1,2} \geq k$ and $|\msf{SubF}(\varphi)| = 1 + |\msf{SubF}(\varphi_1) \cup \msf{SubF}(\varphi_2)|$, hence $|\msf{SubF}(\varphi_1) \cup \msf{SubF}(\varphi_2)| \leq 2k-2$. 
			
			Now, let us assume towards a contradiction that $Y_{1,2} \neq \emptyset$. In that case, by Lemma~\ref{lem:at_least_that_many_subformulae}, since $\varphi_1$ satisfies $\msf{Occ}^+(Y_1)$, it follows that there are at least $2 k_1$ sub-formulae of $\varphi_1$ where $X_1$ occurs. Recall that $X_1$ does not occur in $\varphi_2$. It follows that:
				\begin{equation*}
				|\msf{SubF}(\varphi_2)| \leq 2k-2 - 2 k_1 \leq 2(k_1 + k_2 + k_{1,2}) - 2 - 2 k_1 = 2(k_2 + k_{1,2}) - 2
				\end{equation*}
			However, by Lemma~\ref{lem:at_least_that_many_subformulae}, since $\varphi_2$ satisfies $\msf{Occ}(Y_{1,2} \cup Y_2)$, it follows that $\varphi_2$ has at least $2 (k_2 + k_{1,2}) -1$ sub-formulae, which is a contradiction. In fact, $Y_{1,2} = \emptyset$. 
			
			By Lemma~\ref{lem:at_least_that_many_subformulae}, there are at least $2k_1 -1$ (resp. $2k_2 -1$) sub-formulae in $\varphi_1$ (resp. $\varphi_2$) where $Y_1$ (resp. $Y_2$) occurs. Therefore: 
			\begin{equation*}
				2k-1 \geq |\msf{SubF}(\varphi)| = 1 + |\msf{SubF}(\varphi_1) \cup \msf{SubF}(\varphi_2)| \geq 1 + 2k_1 - 1 + 2k_2-1 = 2(k_1 + k_2)-1
			\end{equation*}
			We obtain that $k = k_1 + k_2$, $|\msf{SubF}(\varphi)| = 2k-1$, $|\msf{SubF}(\varphi_1)| = 2k_1 -1$ and $|\msf{SubF}(\varphi_2)| = 2k_2 -1$. Therefore: 
			\begin{itemize}
				\item $\msf{Sz}(\varphi_1) =  |\msf{SubF}(\varphi_1)| = 2k_1 -1$, hence $\varphi_1$ represents concisely $Y_1$. Hence, by our induction hypothesis, $\varphi_1$ is concise;
				\item similarly, $\msf{Sz}(\varphi_2) =  |\msf{SubF}(\varphi_2)| = 2k_2 -1$, hence $\varphi_2$ represents concisely $Y_2$. Hence, by our induction hypothesis, $\varphi_2$ is concise;
				\item 
				$Y_{1,2} = \prop(\varphi_1) \cap \prop(\varphi_2) = \emptyset$.
			\end{itemize}
			Hence, $\varphi$ is concise. 
		\end{itemize}
		We can conclude that $\mathcal{P}(\varphi)$ holds for all $\LTL$-formulae $\varphi$.
		
		Consider now any $\LTL$-formula $\varphi$ and any $Y \subseteq \prop$. If $\varphi$ represents concisely $Y$, then by $\mathcal{P}(\varphi)$, we have that $\varphi$ is concise. Furthermore, by Lemma~\ref{lem:at_least_that_many_subformulae}, it must be that $\prop(\varphi) \setminus Y = \emptyset$ since $\msf{Sz}(\varphi) \leq 2|Y|-1$. Therefore, $Y = \prop(\varphi)$ and $\msf{Sz}(\varphi) = 2|Y|-1$.
	\end{proof}
	
	\subsubsection{Disjunctive $\LTL$-formulae}
	The lemmas that we have proven so far allow us to restrict the kind of $\LTL$-formulae that we need to consider. We know that it is sufficient to consider the $\LTL$-formulae satisfying the following: 
	\begin{itemize}
		\item they are are temporal-free;
		\item for all $k \in \Idm$, a variable in $X_k$ occurs;
		\item they are concise.
	\end{itemize}
	However, these restrictions are not enough to prove that the premise of Lemma~\ref{lem:proof_simple_case} holds, since the $\LTL$-formulae that we need to consider may use any of the four binary logical connectors. 
	
	Let us give below the definition of the $\LTL$-formulae that are interested in.
	\begin{definition}
		\label{def:disjunctive}
		Consider some $\I \subseteq \Idm$. An $\LTL$-formula $\varphi$ is \emph{disjunctive on $\I$} if:
		\begin{itemize}
			\item $\varphi$ is temporal free;
			\item $\varphi$ is concise;
			\item the only binary logical operator occurring in $\varphi$ is the $\lor$ operator;
			\item $|\prop(\varphi)| = |\I|$ and for all $k \in \I$, we have $|\prop(\varphi) \cap X_k| = 1$.
		\end{itemize}
	\end{definition}
	
	A simple proof by induction shows that if an $\LTL$-formula $\varphi$ is disjunctive on $\Idm$, then it is of the shape: 
	\begin{equation*}
		\varphi = z_1 \lor z_2 \lor \cdots \lor z_m
	\end{equation*}
	where, for all $k \in \Idm$, we have $z_k \in X_k = \{ x_k,\bar{x_k} \}$.
	
	Then, we define what it means for us that an $\LTL$-formula implies a disjunction.
	\begin{definition}
		Consider any $\I \subseteq \Idm$ and an $\LTL$-formula $\varphi$. We say that $\varphi$ \emph{implies a disjunction on $\I$} if there exists an $\LTL$-formula $\varphi'$ that is a disjunction on $\I$ such that, for all $\alpha \subseteq X_\I$, if $\alpha^\omega \models \varphi$, then $\alpha \models \varphi'$.
		
		Furthermore, we say that $\varphi$ is \emph{$\I$-concise} if it is concise, $|\prop(\varphi)| = |\I|$ and for all $k \in \I$, we have $|\prop(\varphi) \cap X_k| = 1$.
	\end{definition}
	
	Our goal is now to establish the lemma below.
	\begin{lemma}
		\label{lem:implies_disjunction}
		Consider a temporal-free $\LTL$-formula $\varphi$ that is $\Idm$-concise and assume that it accepts $\mathcal{P}$ and rejects $\mathcal{N}$. Then, $\varphi$ implies a disjunction on $\Idm$. 
	\end{lemma}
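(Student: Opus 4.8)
The plan is to prove Lemma~\ref{lem:implies_disjunction} by structural induction on the temporal-free, $\Idm$-concise formula $\varphi$, producing at each step a disjunction $\varphi'$ on the corresponding index set. Since $\varphi$ is concise, it is built from propositions using only binary operators, with the two sub-formulae of each binary operator having disjoint proposition sets. So the induction naturally follows the binary tree structure of $\varphi$. The key strengthening I would make is to prove the statement for all sub-formulae of $\varphi$ relative to their own index sets: for every sub-formula $\psi$ of $\varphi$ that is $\I$-concise for some $\I \subseteq \Idm$ (which, by conciseness and Lemma~\ref{lem:concise}, every sub-formula is, with $\I$ the indices of the propositions occurring in $\psi$), I claim $\psi$ implies a disjunction on $\I$. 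The role of the hypotheses ``accepts $\mathcal{P}$, rejects $\mathcal{N}$'' is to supply the base-case constraints; in particular $\eta = \emptyset^\omega \not\models \varphi$ forces $\varphi$ to be ``upward-positive'' in a sense I make precise below.

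\textbf{Base case and the crucial observation.}
The base case is $\psi = x$ for a single proposition $x \in X_k$; here $\I = \{k\}$ and $\varphi' := x$ is itself a disjunction on $\{k\}$, and the implication $\alpha^\omega \models x \Rightarrow \alpha \models x$ is trivial. The essential fact I would isolate first is a \emph{monotonicity} property: for a formula $\psi$ to reject $\eta = \emptyset^\omega$ while accepting the $c_i$ and $\rho_k$, combined with conciseness, the accepting valuations $\{\alpha \subseteq X_\I \mid \alpha^\omega \models \psi\}$ should be an \emph{upward-closed} family (adding propositions preserves acceptance) whose minimal elements are singletons $\{z_k\}$ with $z_k \in X_k$. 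Indeed, each $\rho_k = X_k^\omega$ is accepted and $\eta$ is rejected, so on the coordinate $k$ the formula cannot be purely negative; the disjunction $\varphi' = \bigvee_{k \in \I} z_k$ is then read off from these minimal singletons.

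\textbf{Inductive step — the main obstacle.}
The inductive step is where the real work lies, and the difficulty is that a concise formula may use any of $\lor, \land, \Rightarrow, \Leftrightarrow$ at the top. I would write $\psi = \psi_1 \bullet \psi_2$ with $\prop(\psi_1), \prop(\psi_2)$ disjoint, inducing a partition $\I = \I_1 \uplus \I_2$, and apply the induction hypothesis to get disjunctions $\varphi_1'$ on $\I_1$ and $\varphi_2'$ on $\I_2$. For $\bullet = \lor$ the combination $\varphi' := \varphi_1' \lor \varphi_2'$ works immediately. The hard cases are $\land$, $\Rightarrow$, $\Leftrightarrow$: here I expect to argue that these cannot actually occur at the top of a formula that rejects $\eta$ and accepts all of $\mathcal{P}$, \emph{or} that when they do occur the implied disjunction can still be extracted. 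The cleanest route is to show that rejecting $\eta = \emptyset^\omega$ forces the overall formula to behave disjunctively: since $\emptyset \subseteq X_\I$ and every $c_i, \rho_k$ restricted to $X_\I$ sits above some singleton, I would leverage the monotonicity observation above to show that $\alpha^\omega \models \psi$ holds whenever $\alpha$ contains one of finitely many ``witness'' propositions, and then define $\varphi'$ as the disjunction of exactly those witnesses $z_k$, one per $k \in \I$. Establishing that exactly one witness per coordinate survives — so that $|\prop(\varphi')| = |\I|$ with one proposition per $X_k$, as required by Definition~\ref{def:disjunctive} — is the delicate point, and I would handle it by tracking, through the disjoint-support structure guaranteed by conciseness, that each $\land$ or $\Leftrightarrow$ node is incompatible with accepting both $\rho_k$'s of its two sub-trees while rejecting $\eta$, thereby ruling these connectives out or collapsing them to disjunctive behaviour on the relevant coordinates.
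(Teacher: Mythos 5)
There is a genuine gap, and it sits exactly where you anticipated the difficulty. Your ``crucial observation'' --- that the accepting family $\{\alpha \subseteq X_\I \mid \alpha^\omega \models \psi\}$ is upward-closed with singleton minimal elements --- is false. The paper's own running example provides a counterexample: $\varphi = (x_1 \Leftrightarrow x_2) \Leftrightarrow x_3$ is temporal-free, concise, accepts $\mathcal{P}$ and rejects $\mathcal{N}$, yet it accepts $\{x_3\}^\omega$ while rejecting $\{x_1,x_3\}^\omega$, so acceptance is not preserved under adding propositions. The same example defeats your fallback plan of ruling out $\wedge$, $\Rightarrow$, $\Leftrightarrow$ at the top: a top-level $\Leftrightarrow$ (and likewise $\Rightarrow$, e.g.\ $(x_1 \Leftrightarrow x_2) \Rightarrow x_3$) is entirely compatible with separating $\mathcal{P}$ from $\mathcal{N}$. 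Only the top-level $\wedge$ can actually be excluded by a contradiction with rejecting $\eta$. So neither branch of your inductive step for the hard connectives can be carried out as described.

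The idea you are missing is a \emph{polarity-aware strengthening} of the induction hypothesis. The paper proves two statements simultaneously for every $\I$-concise sub-formula $\psi$: (1) if $\psi$ accepts the relevant positive words and rejects $\eta$, then $\psi$ implies a disjunction on $\I$; and (2) if $\psi$ accepts $\eta$ and rejects the relevant positive words, then $\neg\psi$ implies a disjunction on $\I$. This duality is what makes $\Rightarrow$ and $\Leftrightarrow$ tractable: writing $\psi = \psi_1 \bullet \psi_2$ with disjoint supports $\I = \I_1 \uplus \I_2$, one uses Lemma~\ref{lem:distinguish} to show that $\psi_1$ cannot distinguish $\eta$ from the words $\rho_k$, $c_i$ supported on $X_{\I_2}$ (and symmetrically for $\psi_2$), which pins down the ``polarity'' of each child --- e.g.\ for $\psi = \psi_1 \Rightarrow \psi_2$ accepting the positives and rejecting $\eta$, one deduces that $\psi_1$ accepts $\eta$ and rejects its positives while $\psi_2$ does the opposite, so the induction hypothesis yields a disjunction $\varphi^d_{\neg 1}$ implied by $\neg\psi_1$ and a disjunction $\varphi^d_2$ implied by $\psi_2$, and $\varphi^d_{\neg 1} \lor \varphi^d_2$ is the required disjunction on $\I$. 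Without carrying statement (2) through the induction, the sub-formulae of $\Rightarrow$ and $\Leftrightarrow$ that appear in ``negative position'' give you nothing usable, and the argument stalls. Your base case, your use of conciseness to get the disjoint-support decomposition, and your handling of $\lor$ are all consistent with the paper's proof.
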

	It may seem that the proof of this lemma is a straightforward application of well-known results of propositional logic about formulae in conjunctive normal form (CNF). Indeed, since we are considering a temporal-free $\LTL$-formula, it can be seen as a formula in propositional logic. In this logic, the fact that a formula implies a disjunction is well known: given the expression of this formula in CNF, any clause is a disjunction implied by the formula. However, there are crucial differences here: first, in the disjunctions that we consider, there cannot be any negation. Nonetheless, one can say that $\lnot x_1$ is equivalent to $\bar{x_1}$ and $\lnot \bar{x_1}$ is equivalent to $x_1$, since the proposition $\bar{x}$ is meant as a stand in for $\lnot x$. But the above equivalences do not hold with the $\LTL$ semantics. The reason for that is that $\LTL$-formulae are evaluated on subsets of propositions that may not consider at all a variable in $X$ (i.e. neither $x_k$ not $\bar{x_k}$ appears in the subset for some $k \in \Idm$). Hence, it is not straightforward, a priori, to come up with a disjunction that is implied by $\LTL$-formulae of interest. Let us illustrate this discussion on an example.
	\begin{example}
		Let us come back to the $\msf{SAT}$-formula from Example~\ref{ex:1}. Consider the $\LTL$-formula $\varphi$ below:
		\begin{equation*}
			\varphi := (x_1 \Leftrightarrow x_2) \Leftrightarrow x_3
		\end{equation*}
		One can check that this formula accepts $\mathcal{P}$ and rejects $\mathcal{N}$, as defined in Example~\ref{ex:1}. Furthermore, in propositional logic, this formula implies the disjunction defined below:
		\begin{equation*}
			C = \lnot x_1 \lor \lnot x_2 \lor x_3 
		\end{equation*}

		Let us first explain why we do not allow negations in the disjunctions that we consider. To do so, let us assume that the $\msf{SAT}$-formula $\Phi'$ that we consider is defined by $\Phi' = \Phi \wedge \Phi_3$ where $\Phi_3$ is either equal to $x_3$ or $\lnot x_3$. Then, the formula $C$ would accept $\mathcal{P}$ and reject $\mathcal{N}$ regardless of what $\Phi_3$ is equal to, which is not what we want. 
		In a way, $C$ accepts $\Phi_3$ not because it has properly chosen either $x_3$ or $\bar{x_3}$ but because it has negated another variable ($x_1$ and $x_2$). 
		
		This is why we do not want negations in the disjunctions that we consider. Hence, let us write $C$ as a disjunction without negation. We obtain: 
		\begin{equation*}
			\tilde{C} =  \bar{x_1} \lor \bar{x_2} \lor x_3
		\end{equation*}
		However, it is now not true that $\varphi$ implies $\tilde{C}$. This is witnessed e.g. by the infinite word $(\{ x_1 \})^\omega$ that is accepted by $\varphi$ and rejected by $\tilde{C}$. 
		
		By the way, one can check that the formula $\varphi$ implies on $\{ 1,2,3\}$ the disjunction $x_1 \lor x_2 \lor x_3$.
	\end{example}

	Nonetheless the proof of Lemma~\ref{lem:implies_disjunction} is not complicated, though it is quite long as it involves a proof by induction, with several cases to handle. 
	\begin{proof}
		Let us show by induction on temporal-free concise formulae $\varphi$ the property $\mathcal{P}(\varphi)$: for all $\I \subseteq \Idm$, if $\varphi$ is $\I$-concise, then:
		\begin{enumerate}
			\item if $\varphi$ accepts $\mathcal{P}_\I$ and rejects $\mathcal{N}$, then $\varphi$ implies a disjunction on $\I$;
			\item if $\varphi$ accepts $\mathcal{N}$ and rejects $\mathcal{P}_\I$, then $\neg\varphi$ implies a disjunction on $\I$.
		\end{enumerate}
	
		Straightforwardly, for any concise formula $\varphi = x \in \prop$, $\mathcal{P}(\varphi)$ holds. Indeed, for all $\I \subseteq \Idm$, if $\varphi$ is $\I$-concise, then $\varphi$ itself is a disjunction on $\I$. 
		
		Consider now any concise $\LTL$-formula $\varphi = \varphi_1 \bullet \varphi_2$ for some $\bullet \in \{ \lor,\wedge,\Rightarrow,\Leftrightarrow \}$. Assume that $\varphi$ is $\I$-concise for some $\I \subseteq \Idm$, and that $\mathcal{P}(\varphi_1)$ and $\mathcal{P}(\varphi_2)$ hold. Then, it must be that $\varphi_1$ (resp. $\varphi_2$) is $\I_1$-concise (resp. $\I_2$-concise) for some $\I_1 \subseteq \Idm$ (resp. $\I_2 \subseteq \Idm$) with $\I = \I_1 \cup \I_2$ and $\I_1 \cap \I_2 = \emptyset$. Note that this holds because $\varphi$ is $\I$-concise. Furthermore, this implies that given any disjunctions $\varphi^d_1$ on $\I_1$ and $\varphi^d_2$ on $\I_2$, the $\LTL$-formula $\varphi^d = \varphi^d_1 \lor \varphi^d_2$ is a disjunction on $\I$.

		Note that, by Lemma~\ref{lem:distinguish}, $\varphi_1$ does not distinguish $\rho_k$ and $\rho_{k'}$ for any two $k,k' \in \I_2$. Furthermore, again by Lemma~\ref{lem:distinguish}, $\varphi_1$ does not distinguish $\rho_k$ and $\eta$ for any $k \in \I_2$. Hence, $\varphi_1$ accepts or rejects $\mathcal{P}_{\I_2} \cup \mathcal{N}$. Symmetrically, $\varphi_2$ accepts or rejects $\mathcal{P}_{\I_1} \cup \mathcal{N}$. Finally, for any subset $\I' \subseteq \I$ and $\alpha \in X_\I$, we let $\alpha_{\I'} := \alpha \cap X_{\I'} \subseteq X_{\I'}$. 
		
		Now, they are four cases:
		\begin{itemize}
			\item Assume that $\varphi = \varphi_1 \lor \varphi_2$. 
			\begin{enumerate}
				\item Assume that $\varphi$ accepts $\mathcal{P}_\I$ and rejects $\mathcal{N}$. In that case, we have:
				\begin{itemize}
					\item $\varphi_1$ rejects $\mathcal{N}$, hence it rejects $\mathcal{P}_{\I_2} \subseteq \mathcal{P}_\I$, hence $\varphi_2$ accepts $\mathcal{P}_{\I_2}$;
					\item $\varphi_2$ rejects $\mathcal{N}$, hence it rejects $\mathcal{P}_{\I_1} \subseteq \mathcal{P}_\I$, hence $\varphi_1$ accepts $\mathcal{P}_{\I_1}$.
				\end{itemize}
				Therefore $\varphi_1$ (resp. $\varphi_2$) accepts $\mathcal{P}_{\I_1}$ (resp. $\mathcal{P}_{\I_2}$) and rejects $\mathcal{N}$. By our induction hypothesis, $\varphi_1$ (resp. $\varphi_2$) implies a disjunction $\varphi^d_1$ on $\I_1$ (resp. $\varphi^d_2$ on $\I_2$). Let $\varphi^d := \varphi^d_1 \lor \varphi^d_2$ be a disjunction on $\I$. Let us show that $\varphi$ implies it. 
				
				Consider any $\alpha \subseteq X_{\I}$ and assume that $\alpha^\omega \models \varphi$. Then, $\alpha^\omega \models \varphi_1$ or $\alpha^\omega \models \varphi_2$. Assume that $\alpha^\omega \models \varphi_1$. By Lemma~\ref{lem:distinguish}, it follows that $(\alpha_{\I_1})^\omega \models \varphi_1$ with $\alpha_{\I_1} \subseteq X_{\I_1}$. Hence, $\alpha_{\I_1} \models \varphi^d_1$ and $\alpha_{\I_1} \models \varphi^d$. Therefore, $\alpha \models \varphi^d$ since $\varphi^d$ is a disjunction. This is similar if $\alpha^\omega \models \varphi_2$. It follows that $\varphi$ implies a disjunction on $\I$.
				\item Assume towards a contradiction that $\varphi$ rejects $\mathcal{P}_\I$ and accepts $\mathcal{N}$. In that case:
				\begin{itemize}
					\item $\varphi_1$ rejects $\mathcal{P}_{\I_2} \subseteq \mathcal{P}_{\I}$, hence it rejects $\mathcal{N}$;
					\item $\varphi_2$ rejects $\mathcal{P}_{\I_1} \subseteq \mathcal{P}_{\I}$, hence it rejects $\mathcal{N}$.
				\end{itemize}
				Therefore, $\varphi$ rejects $\mathcal{N}$. Hence, the contradiction. 
			\end{enumerate}
			\item Assume that $\varphi = \varphi_1 \wedge \varphi_2$. 
			\begin{enumerate}
				\item Assume towards a contradiction that $\varphi$ accepts $\mathcal{P}_\I$ and rejects $\mathcal{N}$. In that case:
				\begin{itemize}
					\item $\varphi_1$ accepts $\mathcal{P}_{\I_2} \subseteq \mathcal{P}_{\I}$, hence it accepts $\mathcal{N}$;
					\item $\varphi_2$ accepts $\mathcal{P}_{\I_1} \subseteq \mathcal{P}_{\I}$, hence it accepts $\mathcal{N}$.
				\end{itemize}
				Therefore, $\varphi$ accepts $\mathcal{N}$. Hence, the contradiction.
				\item Assume that $\varphi$ rejects $\mathcal{P}_\I$ and accepts $\mathcal{N}$. In that case, we have:
				\begin{itemize}
					\item $\varphi_1$ accepts $\mathcal{N}$, hence it accepts $\mathcal{P}_{\I_2} \subseteq \mathcal{P}_\I$, hence $\varphi_2$ rejects $\mathcal{P}_{\I_2}$;
					\item $\varphi_2$ accepts $\mathcal{N}$, hence it accepts $\mathcal{P}_{\I_1} \subseteq \mathcal{P}_\I$, hence $\varphi_1$ rejects $\mathcal{P}_{\I_1}$.
				\end{itemize}
				Therefore $\varphi_1$ (resp. $\varphi_2$) rejects $\mathcal{P}_{\I_1}$ (resp. $\mathcal{P}_{\I_2}$) and accepts $\mathcal{N}$. By our induction hypothesis, $\neg \varphi_1$ (resp. $\neg \varphi_2$) implies a disjunction $\varphi^d_{\neg 1}$ on $\I_1$ (resp. $\varphi^d_2$ on $\I_{\neg 2}$). Let $\varphi^d_{\neg} := \varphi^d_{\neg 1} \lor \varphi^d_{\neg 2}$ be a disjunction on $\I$. Let us show that $\neg \varphi$ implies it. 
				
				Consider any $\alpha \subseteq X_{\I}$ and assume that $\alpha^\omega \models \neg \varphi$. Then, $\alpha^\omega \models \neg \varphi_1$ or $\alpha^\omega \models \neg \varphi_2$. Assume that $\alpha^\omega \models \neg \varphi_1$. By Lemma~\ref{lem:distinguish}, it follows that $(\alpha_{\I_1})^\omega \models \neg \varphi_1$. with $\alpha_{\I_1} \subseteq X_{\I_1}$. Hence, $\alpha_{\I_1} \models \varphi^d_{\neg 1}$ and $\alpha_{\I_1} \models \varphi^d_{\neg}$. Therefore $\alpha \models \varphi^d_{\neg}$ since $\varphi^d_{\neg}$ is a disjunction. This is similar if $\alpha^\omega \models \neg\varphi_2$. It follows that $\varphi$ implies a disjunction on $\I$.
			\end{enumerate}
			\item Assume that $\varphi = \varphi_1 \Rightarrow \varphi_2$. 
			\begin{enumerate}
				\item Assume that $\varphi$ accepts $\mathcal{P}_\I$ and rejects $\mathcal{N}$. In that case, we have:
				\begin{itemize}
					\item $\varphi_1$ accepts $\mathcal{N}$, hence it accepts $\mathcal{P}_{\I_2} \subseteq \mathcal{P}_\I$, hence $\varphi_2$ accepts $\mathcal{P}_{\I_2}$;
					\item $\varphi_2$ rejects $\mathcal{N}$, hence it rejects $\mathcal{P}_{\I_1} \subseteq \mathcal{P}_\I$, hence $\varphi_1$ rejects $\mathcal{P}_{\I_1}$.
				\end{itemize}
				Therefore $\varphi_1$ rejects $\mathcal{P}_{\I_1}$ and accepts $\mathcal{N}$. By our induction hypothesis, $\neg \varphi_1$ implies a disjunction $\varphi^d_{\neg 1}$ on $\I_1$. Furthermore, $\varphi_2$ accepts $\mathcal{P}_{\I_2}$ and rejects $\mathcal{N}$. By our induction hypothesis, $\varphi_2$ implies a disjunction $\varphi^d_2$ on $\I_2$. We let $\varphi^d := \varphi^d_{\neg 1} \lor \varphi^d_2$ be a disjunction on $\I$. Let us show that $\varphi$ implies it. 
				
				Consider any $\alpha \subseteq X_{\I}$ and assume that $\alpha^\omega \models \varphi$. Then, $\alpha^\omega \models \neg \varphi_1$ or $\alpha^\omega \models \varphi_2$. Assume that $\alpha^\omega \models \neg \varphi_1$. By Lemma~\ref{lem:distinguish}, it follows that $(\alpha_{\I_1})^\omega \models \neg \varphi_1$. with $\alpha_{\I_1} \subseteq X_{\I_1}$. Hence, $\alpha_{\I_1} \models \varphi^d_{\neg 1}$ and $\alpha_{\I_1} \models \varphi^d$. Therefore $\alpha \models \varphi^d$ since $\varphi^d$ is a disjunction. This is similar if $\alpha^\omega \models \varphi_2$. It follows that $\varphi$ implies a disjunction on $\I$.
				\item Assume towards a contradiction that $\varphi$ rejects $\mathcal{P}_\I$ and accepts $\mathcal{N}$. In that case:
				\begin{itemize}
					\item $\varphi_1$ accepts $\mathcal{P}_{\I_2} \subseteq \mathcal{P}_{\I}$, hence it accepts $\mathcal{N}$;
					\item $\varphi_2$ rejects $\mathcal{P}_{\I_1} \subseteq \mathcal{P}_{\I}$, hence it rejects $\mathcal{N}$.
				\end{itemize}
				Therefore, $\varphi$ rejects $\mathcal{N}$. Hence, the contradiction. 
			\end{enumerate}
			\item Assume that $\varphi = \varphi_1 \Leftrightarrow \varphi_2$. 
			\begin{enumerate}
				\item Assume that $\varphi$ accepts $\mathcal{P}_\I$ and rejects $\mathcal{N}$. In that case, we have:
				\begin{enumerate}
					\item Assume that $\varphi_1$ accepts $\mathcal{N}$. In that case, we have: 
					\begin{itemize}
						\item $\varphi_1$ accepts $\mathcal{N}$, hence it accepts $\mathcal{P}_{\I_2} \subseteq \mathcal{P}_\I$, hence $\varphi_2$ accepts $\mathcal{P}_{\I_2}$;
						\item $\varphi_2$ rejects $\mathcal{N}$, hence it rejects $\mathcal{P}_{\I_1} \subseteq \mathcal{P}_\I$, hence $\varphi_1$ rejects $\mathcal{P}_{\I_1}$.
					\end{itemize}
					Therefore $\varphi_1$ rejects $\mathcal{P}_{\I_1}$ and accepts $\mathcal{N}$. By our induction hypothesis, $\neg \varphi_1$ implies a disjunction $\varphi^d_{\neg 1}$ on $\I_1$. Furthermore, $\varphi_2$ accepts $\mathcal{P}_{\I_2}$ and rejects $\mathcal{N}$. By our induction hypothesis, $\varphi_2$ defines a valuation $\varphi^d_2$ on $\I_2$. We let $\varphi^d := \varphi^d_{\neg 1} \lor \varphi^d_2$ be a disjunction on $\I$. Let us show that $\varphi$ implies it. 
					
					Consider any $\alpha \subseteq X_{\I}$ and assume that $\alpha^\omega \models \varphi$. If $\alpha^\omega \models \varphi_2$, then by Lemma~\ref{lem:distinguish}, it follows that $(\alpha_{\I_2})^\omega \models \varphi_2$. with $\alpha_{\I_2} \subseteq X_{\I_2}$. Hence, $\alpha_{\I_2} \models \varphi^d_2$ and $\alpha_{\I_2} \models \varphi^d$. Therefore $\alpha \models \varphi^d$ since $\varphi^d$ is a disjunction.  
					
					Otherwise (i.e. if $\alpha^\omega \not\models \varphi_2$), then $\alpha^\omega \models \neg\varphi_1$. In that case, by Lemma~\ref{lem:distinguish}, it follows that $(\alpha_{\I_1})^\omega \models \neg \varphi_1$. with $\alpha_{\I_1} \subseteq X_{\I_1}$. Hence, $\alpha_{\I_1} \models \varphi^d_{\neg 1}$ and $\alpha_{\I_1} \models \varphi^d$. Therefore $\alpha \models \varphi^d$ since $\varphi^d$ is a disjunction. 
					
					Hence, $\varphi$ implies a disjunction on $\I$.
				
					\item If $\varphi_2$ accepts $\mathcal{N}$, this is similar to the previous item, by reversing the roles of $\varphi_1$ and $\varphi_2$.
				\end{enumerate}
				\item Assume that $\varphi$ rejects $\mathcal{P}_\I$ and accepts $\mathcal{N}$. 
				\begin{enumerate}
					\item Assume that $\varphi_1$ accepts $\mathcal{P}_{\I_1}$. Then, $\varphi_2$ rejects $\mathcal{P}_{\I_1}$, hence, $\varphi_2$ rejects $\mathcal{N}$, hence $\varphi_1$ rejects $\mathcal{N}$, hence $\varphi_1$ rejects $\mathcal{P}_{\I_2}$, hence $\varphi_2$ accepts $\mathcal{P}_{\I_2}$.

					Therefore $\varphi_1$ (resp. $\varphi_2$) accepts $\mathcal{P}_{\I_1}$ (resp. $\mathcal{P}_{\I_2}$) and rejects $\mathcal{N}$. By our induction hypothesis, $\varphi_1$ (resp. $\varphi_2$) implies a disjunction $\varphi^d_1$ on $\I_1$ (resp. $\varphi^d_2$ on $\I_2$). Let $\varphi^d_{\neg} := \varphi^d_1 \lor \varphi^d_2$ be a valuation on $\I$. Let us show that $\neg \varphi$ implies it. 
					
					Consider any $\alpha \subseteq X_{\I}$ and assume that $\alpha^\omega \models \neg \varphi$. If $\alpha^\omega \models \varphi_1$, then by Lemma~\ref{lem:distinguish}, it follows that $(\alpha_{\I_1})^\omega \models \varphi_1$, with $\alpha_{\I_1} \subseteq X_{\I_1}$. Hence, $\alpha_{\I_1} \models \varphi^d_1$ and $\alpha_{\I_1} \models \varphi^d$. Therefore $\alpha \models \varphi^d$ since $\varphi^d$ is a disjunction. 
					
					Otherwise (i.e. if $\alpha^\omega \not\models \varphi_1$), then $\alpha^\omega \models \varphi_2$. In that case, by Lemma~\ref{lem:distinguish}, it follows that $(\alpha_{\I_2})^\omega \models \varphi_2$, with $\alpha_{\I_2} \subseteq X_{\I_2}$. Hence, $\alpha_{\I_2} \models \varphi^d_2$ and $\alpha_{\I_2} \models \varphi^d$. Therefore $\alpha \models \varphi^d$ since $\varphi^d$ is a disjunction. 
					
					Hence, $\varphi$ implies a disjunction on $\I$.
					\item Assume that $\varphi_1$ rejects $\mathcal{P}_{\I_1}$. Then, $\varphi_2$ accept $\mathcal{P}_{\I_1}$, hence, $\varphi_2$ accepts $\mathcal{N}$, hence $\varphi_1$ accepts $\mathcal{N}$, hence $\varphi_1$ accepts $\mathcal{P}_{\I_2}$, hence $\varphi_2$ rejects $\mathcal{P}_{\I_2}$.
					
					Therefore $\varphi_1$ (resp. $\varphi_2$) rejects $\mathcal{P}_{\I_1}$ (resp. $\mathcal{P}_{\I_2}$) and accepts $\mathcal{N}$. By our induction hypothesis, $\neg \varphi_1$ (resp. $\neg \varphi_2$) defines a valuation $\varphi^d_{\neg 1}$ on $\I_1$ (resp. $\varphi^d_{\neg 1}$ on $\I_2$). Let $\varphi^d_{\neg} := \varphi^d_{\neg 1} \lor \varphi^d_{\neg 1}$ be a valuation on $\I$. Let us show that $\neg \varphi$ implies it. 
					
					Consider any $\alpha \subseteq X_{\I}$ and assume that $\alpha^\omega \models \neg \varphi$. If $\alpha^\omega \models \neg \varphi_1$, then by Lemma~\ref{lem:distinguish}, it follows that $(\alpha_{\I_1})^\omega \models \neg \varphi_1$. with $\alpha_{\I_1} \subseteq X_{\I_1}$. Hence, $\alpha_{\I_1} \models \varphi^d_{\neg 1}$ and $\alpha_{\I_1} \models \varphi^d_{\neg}$. Therefore $\alpha \models \varphi^d_{\neg}$ since $\varphi^d_{\neg}$ is a disjunction.
					
					Otherwise (i.e. if $\alpha^\omega \not\models \neg\varphi_1$), then $\alpha^\omega \models \neg\varphi_2$. In that case, by Lemma~\ref{lem:distinguish}, it follows that $(\alpha_{\I_2})^\omega \models \neg \varphi_2$. with $\alpha_{\I_2} \subseteq X_{\I_2}$. Hence, $\alpha_{\I_2} \models \varphi^d_{\neg 2}$ and $\alpha_{\I_2} \models \varphi^d_{\neg}$. Therefore $\alpha \models \varphi^d_{\neg}$ since $\varphi^d_{\neg}$ is a disjunction.
					
					Hence, $\varphi$ implies a disjunction on $\I$.
				\end{enumerate}
			\end{enumerate}
		\end{itemize}
		Overall, in any case, $\varphi$ implies a disjunction on $\I$.
	\end{proof}

	\subsubsection{Proof of Theorem~\ref{thm}}
	We can now proceed to the proof of Lemma~\ref{lem:hard}.
	\begin{proof}
		Consider an $\LTL$-formula $\varphi$ of size at most $B = 2m-1$ that accepts all formulae in $\mathcal{P}$ and rejects all formulae in $\mathcal{N}$. Since all the infinite words in $\mathcal{P}$ and $\mathcal{N}$ are size-1 infinite words, it follows, by  Lemma~\ref{lem:no_temp}, that this formula $\varphi$ can be chosen temporal-free, up to applying the function $\msf{tr}$. Furthermore, by Corollary~\ref{coro:at_least_k}, for all $k \in \Idm$, a variable in $X_k$ must occur in $\varphi$, i.e. $\prop(\varphi) \cap X_k \neq \emptyset$. Hence, since $\msf{Sz}(\varphi) \leq 2m-1$, it follows, by Lemma~\ref{lem:concise}, that $\prop(\varphi)$ is concise and $|\prop(\varphi)| = m$ and $|\prop(\varphi) \cap X_k| = 1$ for all $k \in \Idm$. Thus, $\varphi$ is $\Idm$-concise. Then, Lemma~\ref{lem:implies_disjunction} gives that $\varphi$ implies a disjunction on $\Idm$. Consider such a disjunction $\varphi^d$ on $\Idm$ that $\varphi$ implies on $\Idm$. Clearly, $\varphi^d$ accepts all words in $\mathcal{P}$ since $\varphi$ does. Furthermore, since the only binary logical operator occurring in $\varphi^d$ is the $\lor$ operator, and since no proposition is negated in $\varphi^d$, it follows that $\varphi^d$ rejects $\eta$. Hence, we can apply Lemma~\ref{lem:proof_simple_case} to obtain that there is a valuation satisfying the formula $\Phi$.
	\end{proof}

	The proof of Theorem~\ref{thm} is now direct.
	\begin{proof}
		This is a direct consequence of Lemmas~\ref{lem:easy} and~\ref{lem:hard} and the fact that the reduction described in Definition~\ref{def:reduction} can be computed in polynomial time.
	\end{proof}

	\subsection{Discussion}
	We conclude this section by a discussion on what our proof shows, and what it does not. First, our proof shows that the decision problem is $\LTL_\msf{Learn}$ is $\msf{NP}$-hard. However, it also shows on the way that variants of this problem are $\msf{NP}$-hard. More precisely:
	\begin{itemize}
		\item Since our reduction only uses size-1 infinite words, our proof shows that the restriction $\LTL_\msf{Learn}^{=1}$ of the decision problem $\LTL_\msf{Learn}$ to size-1 infinite words is still $\msf{NP}$-hard.
		\item Further, since in the proof of Lemma~\ref{lem:easy}, the formula that we exhibit has size exactly $B$, it follows that the variant of $\LTL_\msf{Learn}$ where we look for an $\LTL$-formula of size exactly $B$, instead of being at most $B$, is still $\msf{NP}$-hard.
		\item Any variant of $\LTL_\msf{Learn}$ where only the set unary operators are changed is still $\msf{NP}$-hard.
		\item Any variant of $\LTL_\msf{Learn}$ with less binary operators, while keeping either the $\lor$ binary logical operator or the $\lW$ binary temporal operator (recall Lemma~\ref{lem:equiv_temp}) is still $\msf{NP}$-hard.
	\end{itemize}
	
	However, they are still many unanswered questions:
	\begin{itemize}
		\item We have only considered classical binary operators, hence we do not handle arbitrary binary operators (seen as functions $\{\top,\bot\}^2 \rightarrow \{\top,\bot\}$). We believe that this could be done, however, this would add several cases to the proof of Lemma~\ref{lem:implies_disjunction}.
		\item Assuming that we have achieved the previous item, we believe that we could then handle arbitrary binary temporal operators (seen as functions $(\{\top,\bot\}^\omega)^2 \rightarrow \{\top,\bot\}$), because we consider size-1 infinite words. Hence, we could extend Lemma~\ref{lem:equiv_temp} to handle such arbitrary binary temporal operators.
		\item Overall, we believe that our work can be slightly adapted and/or extended to show that any fragment of $\LTL_\msf{Learn}$ that still contains $\lor$ or $\lW$ (the latter being a \textquotedblleft temporal version \textquotedblleft{} of the former) as binary operators is $\msf{NP}$-hard, regardless of what other operators are added or removed. However, we do not know what happens in a variant of $\LTL_\msf{Learn}$ where the operator $\lor$ (and any of its temporal version) cannot be used anymore. We believe that we can still prove that it is $\msf{NP}$-hard assuming that the $\wedge$ operator can be used (or any of its temporal version, such as $\lM$, recall Lemma~\ref{lem:equiv_temp}). The reduction for the $\wedge$ case could be seen as the dual of the reduction for the $\lor$ case that we have presented in this paper.  However, we do not know how to handle other operators. For instance, is the decision problem $\LTL_\msf{Learn}$ still $\msf{NP}$-hard if the only binary operator that can be used is $\Leftrightarrow$? (Note that this question is asked from a purely theoretical point of view, whether or not such a restriction would make sense in practice is also important to consider.)
	\end{itemize}

	\section{$\CTL$ learning is $\msf{NP}$-hard}
	The goal of this section is to show the theorem below. 
	\begin{theorem}
		\label{thm_CTL}
		The decision problem $\CTL_\msf{Learn}$ is $\msf{NP}$-hard.
	\end{theorem}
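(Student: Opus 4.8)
The plan is to reduce from the restricted problem $\LTL_\msf{Learn}^{=1}$, i.e.\ $\LTL_\msf{Learn}$ where every example is a size-$1$ ultimately periodic word, which is $\msf{NP}$-hard by the first point of the discussion of the previous section. Given an instance $(\prop,\mathcal{P},\mathcal{N},B)$ of $\LTL_\msf{Learn}^{=1}$, I map each size-$1$ word $\alpha^\omega$ (with $\alpha \subseteq \prop$) to the single-state Kripke structure $M_\alpha = (\{s\},\{s\},\{(s,s)\},L)$ with $L(s) = \alpha$, keeping the same propositions and the same bound $B$. Writing $\mathcal{P}' := \{ M_\alpha \mid \alpha^\omega \in \mathcal{P}\}$ and $\mathcal{N}' := \{M_\alpha \mid \alpha^\omega \in \mathcal{N}\}$, the output is $(\prop,\mathcal{P}',\mathcal{N}',B)$, which is clearly computable in polynomial time. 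It then remains to show that there is an $\LTL$-formula of size at most $B$ separating $\mathcal{P}$ and $\mathcal{N}$ if and only if there is a $\CTL$-formula of size at most $B$ separating $\mathcal{P}'$ and $\mathcal{N}'$. The structural fact driving both directions is that each $M_\alpha$ has a unique path, namely $s^\omega$, and $s^\omega[i:] = s^\omega$ for all $i$; hence on these structures $\lE$ and $\lA$ coincide, and a path formula evaluated along $s^\omega$ behaves exactly like the corresponding $\LTL$-formula evaluated on $\alpha^\omega$.

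For the easy direction, starting from an $\LTL$-formula separating $\mathcal{P}$ and $\mathcal{N}$ I first apply Lemma~\ref{lem:no_temp} to obtain a temporal-free formula $\varphi$ of size at most $B$ that is still separating (this is legitimate since all words are size-$1$). A temporal-free $\LTL$-formula uses only propositions and the logical connectives, so it is syntactically a $\CTL$-state formula of the same size. Since $L(s) = \alpha$, a straightforward induction gives $M_\alpha \models \varphi$ if and only if $\alpha^\omega \models \varphi$, so $\varphi$ separates $\mathcal{P}'$ and $\mathcal{N}'$ with size at most $B$.

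For the hard direction I establish the $\CTL$ analogue of Lemmas~\ref{lem:equiv_temp} and~\ref{lem:no_temp}. Because each $M_\alpha$ has the single path $s^\omega$, the single-state versions of the equivalences of Lemma~\ref{lem:equiv_temp} hold: for every $\diamond \in \{\lE,\lA\}$ we have, on single-state self-loop structures, $\diamond\lX\varphi \equiv \diamond\lF\varphi \equiv \diamond\lG\varphi \equiv \varphi$, $\diamond(\varphi_1 \lU \varphi_2) \equiv \diamond(\varphi_1 \lR \varphi_2) \equiv \varphi_2$, $\diamond(\varphi_1 \lW \varphi_2) \equiv \varphi_1 \lor \varphi_2$, and $\diamond(\varphi_1 \lM \varphi_2) \equiv \varphi_1 \wedge \varphi_2$. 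Using these, I define by induction a translation $\msf{tr}_{\CTL}$ that deletes every path quantifier together with its temporal operator, exactly as in the proof of Lemma~\ref{lem:no_temp}; it does not increase the number of sub-formulae and yields a path-quantifier-free (hence temporal-free) $\CTL$-formula agreeing with the original on every single-state self-loop structure. Given a separating $\CTL$-formula $\Phi$ of size at most $B$ for $\mathcal{P}'$ and $\mathcal{N}'$, its image $\msf{tr}_{\CTL}(\Phi)$ is then a temporal-free formula of size at most $B$ that still separates $\mathcal{P}'$ and $\mathcal{N}'$; reading it as an $\LTL$-formula and using the identity $M_\alpha \models \psi \Leftrightarrow \alpha^\omega \models \psi$ for temporal-free $\psi$, it separates $\mathcal{P}$ and $\mathcal{N}$ with size at most $B$.

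The main obstacle is the bookkeeping in this $\CTL$ analogue of Lemma~\ref{lem:no_temp}: unlike in $\LTL$, the $\CTL$ grammar couples each temporal operator with a path quantifier, and $\msf{SubF}$ counts $\diamond(\bullet\,\Phi)$ and $\diamond(\Phi_1 \bullet \Phi_2)$ as single sub-formulae, so I must re-verify case by case that replacing such a sub-formula by $\msf{tr}_{\CTL}$ of its operand(s) never increases the sub-formula count, mirroring the argument of Lemma~\ref{lem:no_temp}. Once this is in place, combining the two directions shows that $(\prop,\mathcal{P},\mathcal{N},B)$ is a positive instance of $\LTL_\msf{Learn}^{=1}$ if and only if $(\prop,\mathcal{P}',\mathcal{N}',B)$ is a positive instance of $\CTL_\msf{Learn}$, establishing $\LTL_\msf{Learn}^{=1} \leq_p \CTL_\msf{Learn}$ and hence the $\msf{NP}$-hardness of $\CTL_\msf{Learn}$.
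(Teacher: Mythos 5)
Your proposal is correct and follows essentially the same route as the paper: a polynomial-time reduction from $\LTL_\msf{Learn}^{=1}$ that sends each size-$1$ word $\alpha^\omega$ to the single-state self-loop Kripke structure labelled by $\alpha$, keeping the same propositions and the same bound $B$. The only (minor) divergence is in the correspondence argument: the paper uses size-preserving translations $\msf{f}_{\CTL-\LTL}$ and $\msf{f}_{\LTL-\CTL}$ that merely strip or insert path quantifiers while keeping the temporal operators, whereas you normalize both directions to temporal-free formulae via Lemma~\ref{lem:no_temp} and a $\CTL$ analogue of it; both are sound, and yours simply trades the paper's quantifier-only translation for a reuse of the temporal-operator-elimination machinery already developed for $\LTL$.
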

	
	We proceed via a reduction from the decision $\LTL_\msf{Learn}^{= 1}$ that is the variant of $\LTL_\msf{Learn}$ where all ultimately periodic words are of size 1. To properly define this reduction, let us first define how to translate a size-1 ultimately periodic word into a single-state Kripke structure.
	\begin{definition}
		Consider a set of propositions $\prop$ and a size-1 ultimately periodic word $w  = \alpha^\omega \in (2^\prop)^\omega$ for some $\alpha \subseteq \prop$. We define the Kripke structure $M_w = (S,I,R,L)$ where:
		\begin{itemize}
			\item $S := \{ q \}$;
			\item $I := S = \{ q \}$;
			\item $R := \{ (q,q) \}$;
			\item $L(q) := \alpha \subseteq \prop$.
		\end{itemize}
	\end{definition}

	Before we define our reduction, we want to express what the above definition satisfies. To do so, we define below how to translate $\CTL$-formulae into $\LTL$-formulae.
	
	\begin{definition}
		Consider a set of propositions $\prop$. We let $\msf{f}_{\CTL-\LTL}: \CTL \rightarrow \LTL$ be such that it removes all $\exists$ or $\forall$ quantifiers from a $\CTL$-formula, while keeping all other operators. More formally:
		\begin{itemize}
			\item $\msf{f}_{\CTL-\LTL}(p) = p$ for all $p \in \prop$;
			\item $\msf{f}_{\CTL-\LTL}(\neg \varphi) = \neg \msf{f}_{\CTL-\LTL}(\varphi)$;
			\item $\msf{f}_{\CTL-\LTL}(\diamond (\bullet  \msf{f}_{\CTL-\LTL}(\varphi)) = \bullet  \msf{f}_{\CTL-\LTL}(\varphi))$ for all $\bullet \in \mathsf{Op}_{Un}^{\msf{temp}}$ and $\diamond \in \{\exists,\forall\}$;
			\item $\msf{f}_{\CTL-\LTL}(\varphi_1 \bullet \varphi_2) = \msf{f}_{\CTL-\LTL}(\varphi_1) \bullet \msf{f}_{\CTL-\LTL}( \varphi_2)$ for all $\bullet \in \mathsf{Op}_{Un}^{\msf{lg}}$;
			\item $\msf{f}_{\CTL-\LTL}(\diamond (\varphi_1 \bullet \varphi_2)) = \msf{f}_{\CTL-\LTL}(\varphi_1) \bullet  \msf{f}_{\CTL-\LTL}( \varphi_2)$ for all $\bullet \in \mathsf{Op}_{Un}^{\msf{temp}}$ and $\diamond \in \{\exists,\forall\}$.
		\end{itemize}
	\end{definition}
	
	A straightforward proof by induction shows that the above definition satisfies the lemma below:
	\begin{lemma}
		\label{lem:translate_ltl_ctl}
		Consider a set of propositions $\prop$ and any $\CTL$-formula $\varphi$. We have:
		\begin{itemize}
			\item $\msf{Sz}(\varphi) = \msf{Sz}(\msf{f}_{\CTL-\LTL}(\varphi))$;
			\item for all $\alpha \subseteq \prop$, we have $\alpha^\omega \models \msf{f}_{\CTL-\LTL}(\varphi)$ if and only if $M_{\alpha^\omega} \models \varphi$.
		\end{itemize}
	
		Furthermore, the function $\msf{f}_{\LTL-\CTL}: \LTL \rightarrow \CTL$ that adds an $\exists$ quantifier in an $\LTL$-formula before every temporal operator, and therefore that defines a $\CTL$-formula that way, ensures the following:
		\begin{itemize}
			\item for all $\LTL$-formulae $\varphi$, $\msf{Sz}(\varphi) = \msf{Sz}(\msf{f}_{\LTL-\CTL}(\varphi))$;
			\item for all $\LTL$-formulae $\varphi$, we have $\msf{f}_{\CTL-\LTL}(\msf{f}_{\LTL-\CTL}(\varphi)) = \varphi$.
		\end{itemize}
	\end{lemma}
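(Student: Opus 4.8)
The plan is to prove all four assertions by structural induction: the first two by induction on the $\CTL$-formula $\varphi$, and the last two by induction on the $\LTL$-formula $\varphi$. The single observation that drives the semantic part is that the Kripke structure $M_{\alpha^\omega}$ has $R = \{(q,q)\}$ and $I = \{q\}$, so $\paths(q) = \{q^\omega\}$ is a \emph{singleton}. Consequently the two path quantifiers coincide on $M_{\alpha^\omega}$: both $q \models \lE \psi$ and $q \models \lA \psi$ are equivalent to ``the unique path $q^\omega$ satisfies $\psi$''. This is precisely what licenses $\msf{f}_{\CTL-\LTL}$ to erase every quantifier without changing the truth value.

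For the size equality (first assertion), I would build, by induction on $\varphi$, a one-to-one correspondence between the sub-formulae of $\varphi$ and those of $\msf{f}_{\CTL-\LTL}(\varphi)$ in which each syntactic construction contributes exactly one new sub-formula on each side. The base case $\varphi = p$ is trivial. For $\neg \Phi$ and for $\Phi_1 \bullet \Phi_2$ with $\bullet \in \mathsf{Op}_{Bin}^{\msf{lg}}$, the map keeps the top operator and recurses, so the claim reduces to the induction hypotheses on the immediate sub-formulae. The crucial cases are $\diamond(\bullet \Phi)$ with $\bullet \in \mathsf{Op}_{Un}^{\msf{temp}}$ and $\diamond(\Phi_1 \bullet \Phi_2)$ with $\bullet \in \mathsf{Op}_{Bin}^{\msf{temp}}$: the $\CTL$ definition of $\msf{SubF}$ folds the quantifier-temporal pair into a \emph{single} new sub-formula, which matches the single temporal sub-formula created on the $\LTL$ side once the quantifier is deleted. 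Hence each step adds one node to both sides, and the induction yields $\msf{Sz}(\varphi) = \msf{Sz}(\msf{f}_{\CTL-\LTL}(\varphi))$.

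For the semantic equivalence (second assertion), I would induct on $\varphi$, evaluating on $M_{\alpha^\omega}$ and on the word $\alpha^\omega$ in lockstep. The case $\varphi = p$ holds because $L(q) = \alpha$, so $q \models p \Leftrightarrow p \in \alpha \Leftrightarrow \alpha^\omega \models p$; the Boolean cases follow immediately from the induction hypotheses. For a quantified temporal formula $\diamond \psi$ with $\diamond \in \{\lE,\lA\}$, the singleton-path observation reduces $q \models \diamond \psi$ to ``$q^\omega \models \psi$''; identifying the path $q^\omega$ with the word $\alpha^\omega$ (every position is labelled $\alpha$), the path semantics of $\psi$ become exactly the word semantics of the matching $\LTL$ formula, and the induction hypothesis on the operands closes each of the cases $\lX,\lF,\lG,\lU,\lR,\lW,\lM$.

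For the two $\msf{f}_{\LTL-\CTL}$ assertions I would induct on the $\LTL$-formula $\varphi$. Size preservation is the mirror of the first assertion: inserting an $\exists$ in front of each temporal operator sends one $\LTL$ temporal sub-formula to one $\CTL$ quantifier-temporal sub-formula, while propositions, negations, and binary logical operators are left alone, so nodes are preserved one-to-one. The identity $\msf{f}_{\CTL-\LTL}(\msf{f}_{\LTL-\CTL}(\varphi)) = \varphi$ is then checked operator by operator: inserting an $\exists$ and immediately deleting it restores the original temporal operator, and every non-temporal operator is fixed by both maps. The main obstacle in the whole argument is the accounting in the first assertion---one must verify carefully that the $\CTL$ convention of counting each quantifier-temporal pair as a single sub-formula is exactly what forces the equality (and not merely an inequality) in the temporal cases.
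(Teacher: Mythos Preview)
Your approach---structural induction on the formula---is exactly what the paper has in mind (it writes only ``A straightforward proof by induction shows\ldots'' and gives no further details). The semantic equivalence and the two $\msf{f}_{\LTL-\CTL}$ assertions go through just as you outline; in particular, your observation that $\paths(q)=\{q^\omega\}$ collapses $\lE$ and $\lA$ is the right engine for the second item.

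There is, however, a genuine gap in your argument for the first assertion $\msf{Sz}(\varphi) = \msf{Sz}(\msf{f}_{\CTL-\LTL}(\varphi))$---and in fact the equality as stated does not hold in general. Take $\varphi = \lE(\lF p) \wedge \lA(\lF p)$: its set of sub-formulae is $\{\varphi,\ \lE(\lF p),\ \lA(\lF p),\ p\}$, so $\msf{Sz}(\varphi)=4$; but $\msf{f}_{\CTL-\LTL}(\varphi) = \lF p \wedge \lF p$, whose sub-formulae are $\{\lF p \wedge \lF p,\ \lF p,\ p\}$, giving size $3$. The one-to-one correspondence you propose fails precisely here: two distinct $\CTL$ sub-formulae $\lE(\lF p)$ and $\lA(\lF p)$ collapse to the same $\LTL$ sub-formula $\lF p$. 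What your induction \emph{does} yield is $\msf{SubF}(\msf{f}_{\CTL-\LTL}(\varphi)) \subseteq \msf{f}_{\CTL-\LTL}[\msf{SubF}(\varphi)]$ and hence the inequality $\msf{Sz}(\msf{f}_{\CTL-\LTL}(\varphi)) \leq \msf{Sz}(\varphi)$, which is all that is actually used downstream in the proof of Lemma~\ref{lem:equiv_ctl}. By contrast, the equality in the third assertion \emph{is} correct, because $\msf{f}_{\LTL-\CTL}$ always inserts the same quantifier $\lE$ and is therefore injective on sub-formulae; your accounting argument is sound in that direction.
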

	
	We can now define the reduction that we consider. 
	\begin{definition}
		\label{def:reduc_ctl}
		Consider an instance $\msf{In}_{\LTL} = (\prop,\mathcal{P},\mathcal{N},B)$ of the $\LTL_\msf{Learn}^{=1}$ decision problem. We define the input $\msf{In}_{\CTL} = (\prop,\mathcal{P}',\mathcal{N}',B)$ with: 
		\begin{equation}
			\mathcal{P}' := \{ M_w \mid w \in \mathcal{P} \}
		\end{equation}
		and $\mathcal{N}'$ is defined similarly from $\mathcal{N}$.
	\end{definition}
	Clearly this reduction can be computed in polynomial time. Let us now show that it satisfies the desired property. 
	
	\begin{lemma}
		\label{lem:equiv_ctl}
		The input $\msf{In}_{\LTL}$ is a positive instance of the $\LTL_\msf{Learn}^{=1}$ decision problem if and only if $\msf{In}_{\CTL}$ is a positive instance of the $\CTL_\msf{Learn}$ decision problem.
	\end{lemma}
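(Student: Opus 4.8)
The plan is to derive both implications directly from Lemma~\ref{lem:translate_ltl_ctl}, which already packages every fact we need: the translations $\msf{f}_{\CTL-\LTL}$ and $\msf{f}_{\LTL-\CTL}$ preserve size, the semantic correspondence $\alpha^\omega \models \msf{f}_{\CTL-\LTL}(\varphi) \Leftrightarrow M_{\alpha^\omega} \models \varphi$ holds, and the composition $\msf{f}_{\CTL-\LTL} \circ \msf{f}_{\LTL-\CTL}$ is the identity on $\LTL$. Since every example in $\mathcal{P}'$ (resp.\ $\mathcal{N}'$) is of the form $M_w$ for a size-1 word $w = \alpha^\omega \in \mathcal{P}$ (resp.\ $\mathcal{N}$), separation on one side transfers to the other essentially verbatim.

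For the forward direction, I would start from an $\LTL$-formula $\varphi$ with $\msf{Sz}(\varphi) \leq B$ separating $\mathcal{P}$ and $\mathcal{N}$, and set $\Phi := \msf{f}_{\LTL-\CTL}(\varphi)$. Size preservation gives $\msf{Sz}(\Phi) = \msf{Sz}(\varphi) \leq B$. Using $\msf{f}_{\CTL-\LTL}(\Phi) = \varphi$ together with the semantic correspondence, for every $w = \alpha^\omega$ I obtain $M_w \models \Phi \Leftrightarrow \alpha^\omega \models \varphi \Leftrightarrow w \models \varphi$; hence $\Phi$ accepts $M_w$ exactly when $\varphi$ accepts $w$. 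This shows $\Phi$ accepts all of $\mathcal{P}'$ and rejects all of $\mathcal{N}'$, so $\msf{In}_{\CTL}$ is a positive instance.

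For the backward direction, I would symmetrically start from a $\CTL$-formula $\Phi$ with $\msf{Sz}(\Phi) \leq B$ separating $\mathcal{P}'$ and $\mathcal{N}'$, and set $\varphi := \msf{f}_{\CTL-\LTL}(\Phi)$, so that $\msf{Sz}(\varphi) = \msf{Sz}(\Phi) \leq B$. The semantic correspondence now applies verbatim, giving for each $w = \alpha^\omega$ that $w \models \varphi \Leftrightarrow M_w \models \Phi$; hence $\varphi$ separates $\mathcal{P}$ and $\mathcal{N}$ because $\Phi$ separates $\mathcal{P}'$ and $\mathcal{N}'$, and $\msf{In}_{\LTL}$ is a positive instance.

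There is no genuine obstacle here: the only point requiring a moment of care is the forward direction, where one must invoke the composition identity $\msf{f}_{\CTL-\LTL}(\msf{f}_{\LTL-\CTL}(\varphi)) = \varphi$ so that the semantic correspondence (stated for formulae of the form $\msf{f}_{\CTL-\LTL}(\cdot)$) can be applied to $\Phi = \msf{f}_{\LTL-\CTL}(\varphi)$. Everything else is a direct unfolding of Lemma~\ref{lem:translate_ltl_ctl}, and the fact that the reduction is computable in polynomial time was already observed, so this lemma together with Lemma~\ref{lem:translate_ltl_ctl} immediately yields Theorem~\ref{thm_CTL} via the $\msf{NP}$-hardness of $\LTL_\msf{Learn}^{=1}$.
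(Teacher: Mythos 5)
Your proposal is correct and follows essentially the same route as the paper's own proof: both directions translate the separating formula via $\msf{f}_{\LTL-\CTL}$ (resp. $\msf{f}_{\CTL-\LTL}$) and invoke the size preservation, the semantic correspondence on size-1 words, and the composition identity from Lemma~\ref{lem:translate_ltl_ctl}. Your explicit remark that the composition identity is what licenses applying the semantic correspondence to $\msf{f}_{\LTL-\CTL}(\varphi)$ in the forward direction matches exactly how the paper uses it.
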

	\begin{proof}
		Assume that $\msf{In}_{\LTL}$ is a positive instance of the $\LTL_\msf{Learn}^{=1}$ decision problem. Let $\varphi_\LTL$ be an $\LTL$-formula separating $\mathcal{P}$ and $\mathcal{N}$ of size at most $B$. Consider the $\CTL$-formula $\varphi_\CTL := \msf{f}_{\msf{LTL}-\msf{CTL}}(\varphi_\LTL)$. By Lemma~\ref{lem:translate_ltl_ctl}, we have $\msf{Sz}(\varphi_\CTL) = \msf{Sz}(\varphi_\LTL) \leq B$ and  $\msf{f}_{\msf{CTL}-\msf{LTL}}(\varphi_\CTL) = \varphi_\LTL$.
		Consider now any $M_w \in \mathcal{P}'$ with $w \in \mathcal{P}$. Again by Lemma~\ref{lem:translate_ltl_ctl}, we have $M_{w} \models \varphi_\CTL$ if and only if $w \models \msf{f}_{\msf{CTL}-\msf{LTL}}(\varphi_\CTL) = \varphi_\LTL$, since $w$ is a size-1 ultimately periodic word. Since $w \in \mathcal{P}$, it follows that $w \models \varphi_\LTL$ and $M_{w} \models \varphi_\CTL$. This is similar for any $M_w \in \mathcal{N}'$ with $w \in \mathcal{N}$. Hence, $\msf{In}_{\CTL}$ is a positive instance of the $\CTL_\msf{Learn}$ decision problem. 
		
		Assume now that $\msf{In}_{\CTL}$ is a positive instance of the $\CTL_\msf{Learn}$ decision problem. Let $\varphi_\CTL$ be a $\CTL$-formula separating $\mathcal{P}'$ and $\mathcal{N}'$ of size at most $B$. Consider the $\LTL$-formula $\varphi_\LTL := \msf{f}_{\msf{CTL}-\msf{LTL}}(\varphi_\CTL)$. By Lemma~\ref{lem:translate_ltl_ctl}, we have $\msf{Sz}(\varphi_\LTL) = \msf{Sz}(\varphi_\CTL) \leq B$. Consider now any $w \in \mathcal{P}$. Again, by Lemma~\ref{lem:translate_ltl_ctl}, we have $M_{w} \models \varphi_\CTL$ if and only if $w \models \varphi_\LTL$, since $w$ is a size-1 ultimately periodic word. Since $M_{w} \in \mathcal{P}'$, it follows that $M_{w} \models \varphi_\CTL$ and $w \models \varphi_\LTL$. This is similar for any $w \in \mathcal{N}$. Hence, $\msf{In}_{\LTL}$ is a positive instance of the $\LTL_\msf{Learn}$ decision problem.
	\end{proof}
	
	The proof of Theorem~\ref{thm_CTL} is now direct.
	\begin{proof}
		It is straightforward consequence of Lemma~\ref{lem:equiv_ctl} and the fact that the reduction from Definition~\ref{def:reduc_ctl} can be computed in polynomial time.
	\end{proof}
	
	\section{Conclusion}
	We have proved that both $\LTL$ learning and $\CTL$ learning are $\msf{NP}$-hard. As argued in the introduction and stated in Proposition~\ref{prop:easy}, we obtain that both $\LTL$ learning and $\CTL$ learning are $\msf{NP}$-complete. Hence, it would seem that $\LTL$ learning and $\CTL$ learning are equivalently hard. However, we believe that, in some sense, $\CTL$ learning is harder than $\LTL$ learning. As a future work, we plan to investigate syntactical restrictions on formulae such that the resulting $\LTL$ fragment becomes easy (i.e. it could now be decided in polynomial time) whereas the resulting $\CTL$ fragment stays $\msf{NP}$-complete. 
	
	\bibliographystyle{splncs04}
	\bibliography{refs}
\end{document}